\newcommand*\samethanks[1][\value{footnote}]{\footnotemark[#1]}
\g@addto@macro\bfseries{\boldmath}
\g@addto@macro\mdseries{\unboldmath}
\g@addto@macro\normalfont{\unboldmath}
\g@addto@macro\rmfamily{\unboldmath}
\g@addto@macro\upshape{\unboldmath}
\renewcommand*{\multicitedelim}{\addcomma\space}
    \newlength{\temp@x}%
    \newlength{\temp@y}%
    \newlength{\temp@w}%
    \newlength{\temp@h}%
    \def\my@coords#1#2#3#4{%
      \setlength{\temp@x}{#1}%
      \setlength{\temp@y}{#2}%
      \setlength{\temp@w}{#3}%
      \setlength{\temp@h}{#4}%
      \adjustlengths{}%
      \my@pdfliteral{\strip@pt\temp@x\space\strip@pt\temp@y\space\strip@pt\temp@w\space\strip@pt\temp@h\space re}}%
      \def\my@pdfliteral#1{\pdfliteral page{#1}}
      \def\adjustlengths{}%
      \def\my@pdfliteral #1{}
      \def\adjustlengths{\setlength{\temp@h}{-\temp@h}\addtolength{\temp@y}{1in}\addtolength{\temp@x}{-1in}}%
    \def\Hy@colorlink#1{%
      \begingroup
        \ifHy@ocgcolorlinks
          \def\Hy@ocgcolor{#1}%
          \my@pdfliteral{q}%
          \my@pdfliteral{7 Tr}
        \else
          \HyColor@UseColor#1%
        \fi
    }%
    \def\Hy@endcolorlink{%
      \ifHy@ocgcolorlinks%
        \my@pdfliteral{/OC/OCPrint BDC}%
        \my@coords{0pt}{0pt}{\pdfpagewidth}{\pdfpageheight}%
        \my@pdfliteral{F}
        %
        \my@pdfliteral{EMC/OC/OCView BDC}%
        \begingroup%
          \expandafter\HyColor@UseColor\Hy@ocgcolor%
          \my@coords{0pt}{0pt}{\pdfpagewidth}{\pdfpageheight}%
          \my@pdfliteral{F}
        \endgroup%
        \my@pdfliteral{EMC}%
        \my@pdfliteral{0 Tr}
        \my@pdfliteral{Q}%
      \fi
      \endgroup
    }%
\colorlet{DarkRed}{red!50!black}
\colorlet{DarkGreen}{green!50!black}
\colorlet{DarkBlue}{blue!50!black}
\declaretheorem[numberwithin=section]{theorem}
\declaretheorem[numberlike=theorem]{lemma}
\declaretheorem[numberlike=theorem]{corollary}
\declaretheorem[numberlike=theorem]{definition}
\newcommand{\mL}{\boldsymbol{\mathit{L}}}
\DeclareMathOperator{\poly}{poly}
\DeclareMathOperator{\ID}{ID}
\DeclareMathOperator{\Z}{\mathbb{Z}}
\DeclareMathOperator{\R}{\mathbb{R}}
\DeclareMathOperator{\Vol}{Vol}
\DeclareMathOperator{\Omegat}{\widetilde{\Omega}}
\title{The Laplacian Paradigm in Deterministic Congested Clique}
\author{
  Sebastian Forster\thanks{Department of Computer Science, University of Salzburg, Austria. This work is supported by the Austrian Science Fund (FWF): P 32863-N. This project has received funding from the European Research Council (ERC) under the European Union's Horizon 2020 research and innovation programme (grant agreement No~947702).}
  \and
  Tijn de Vos\samethanks
}
\date{}
\begin{document}
\maketitle
\begin{abstract}
In this paper, we bring the techniques of the Laplacian paradigm to the congested clique, while further restricting ourselves to deterministic algorithms. In particular, we show how to solve a Laplacian system up to precision $\epsilon$ in $n^{o(1)}\log(1/\epsilon)$ rounds. We show how to leverage this result within existing interior point methods for solving flow problems. We obtain an $m^{3/7+o(1)}U^{1/7}$ round algorithm for maximum flow on a weighted directed graph with maximum weight $U$, and we obtain an $\tilde O(m^{3/7}(n^{0.158}+n^{o(1)}\poly\log W))$ round algorithm for unit capacity minimum cost flow on a directed graph with maximum cost~$W$. Hereto, we give a novel routine for computing Eulerian orientations in $O(\log n \log^* n)$ rounds, which we believe may be of separate interest. 

\end{abstract}

\newpage
\tableofcontents
\newpage

\section{Introduction}
Over the last years, the Laplacian paradigm has proven itself successful in both sequential and parallel settings. To discuss this paradigm, let us first define the \emph{Laplacian matrix} $L(G)\in \R^{n\times n}$ of a undirected $n$-node graph $G$: $L(G):= D(G)-A(G)$, where $D(G)$ is the diagonal weighted degree matrix, and $A(G)$ is the weighted adjacency matrix. If $w\colon E \to \R$ is the weight function, we have 
\begin{align*}
    L(G)_{u,v} = \begin{cases}
        \sum_{(x,u)\in E}w(u,x) & \text{if }u=v;
        \\
        -w(u,v) &\text{if }(u,v)\in E;
        \\
        0 & \text{else.}
    \end{cases}
\end{align*}
A \emph{Laplacian system} is an equation $L(G)x=y$, and a \emph{Laplacian solver} solves, given $L(G)$ and $y\in \R^n$, a Laplacian system for $x\in \R^n$. The Laplacian paradigm encompasses the study of spectral techniques, among which Laplacian solvers, for solving graph problems. This was initiated by Spielman and Teng~\cite{SpielmanT04}, who presented a near-linear time Laplacian solver. Subsequently, more efficient sequential and parallel solvers have been developed~\cite{KOSZ13,KoutisMP14,KoutisMP11, CohenKMPPRX14,KyngS16,PengS14,KLP+16}. This line of research has found many applications, including but not limited to flow problems~\cite{Madry13,Sherman13,KelnerLOS14,Madry16,Peng16,CMSV17,LiuS2020FasterDivergence,LiuS20,AxiotisMV20}, bipartite matching~\cite{BrandLNPSS0W20}, and (distributed/parallel) shortest paths~\cite{BKKL17,Li20,AndoniSZ20}.

Recently the Laplacian paradigm has been brought to distributed models. In particular, it has appeared in the CONGEST 
 model~\cite{BKKL17,GhaffariKKLP18,FGLP+20,Anagnostides0HZ22} and in the Broadcast Congested Clique~\cite{BKKL17,ForsterV22}. We continue this line of work by bringing the Laplacian paradigm to the more powerful congested clique~\cite{LPSPP05}, but restrict ourselves to \emph{deterministic} algorithms. 
 
 Our first contribution is an algorithm for solving Laplacian systems in the congested clique.\footnote{The notation used in \autoref{thm:Laplacian_detCC} is reviewed in \autoref{sc:prelim}.}
\begin{restatable}{theorem}{ThmLaplacian}\label{thm:Laplacian_detCC}
There is a deterministic algorithm in the congested clique that, given an undirected graph $G=(V,E)$, with positive real weights $w\colon E\to \R_{\geq 0}$, bounded by $U$, and a vector $b\in \R^n$, computes a vector $x\in \R^n$ such that $||x-L_G^\dagger b||_{L_G} \leq \epsilon ||L_G^\dagger b||_{L_G}$ in $n^{o(1)}\log(U/\epsilon)$ rounds. 
\end{restatable}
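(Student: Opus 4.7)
The plan is to implement a preconditioned-iteration Laplacian solver in the deterministic congested clique, by constructing a constant-factor spectral preconditioner $Z \approx L_G^{\dagger}$ that can be applied in $n^{o(1)}$ rounds. Wrapping $Z$ in preconditioned Richardson iteration (or Chebyshev iteration, for a quadratic improvement in constants) drives the relative $L_G$-error down to $\epsilon$ after $O(\log(U/\epsilon))$ outer iterations, where the $\log U$ factor absorbs the bit complexity of the input weights. Each outer iteration performs one Laplacian--vector product, which costs $O(1)$ rounds in the congested clique (just weighted message exchanges along the edges of $G$, trivially schedulable via Lenzen's routing), plus one application of the preconditioner.

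The main technical step is the deterministic construction of $Z$. I would follow the now-standard template of \emph{preconditioner chains}: produce a sequence $L_G = L_0, L_1, \ldots, L_k$ where $L_{i+1}$ is a spectral approximation of the Schur complement of $L_i$ onto a carefully chosen vertex subset, so that $n_{i+1} \leq n_i / 2$ and $L_k$ is small enough to invert directly. Two deterministic building blocks are needed at each level: first, a deterministic spectral sparsifier that keeps the edge count near-linear, obtained from the recent deterministic sparsifier constructions that derandomize effective-resistance sampling via barrier-function or expander-based arguments; second, a deterministic choice of an \emph{eliminable} vertex set, which I would get from a deterministic low-diameter or expander decomposition executed in the clique. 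Distributing the intermediate sparse matrices across the vertices and exploiting the $\Theta(n)$ all-to-all bandwidth per node, one level of the chain can be built and applied in $n^{o(1)}$ rounds; with $k = O(\log n)$ levels the total cost per preconditioner application remains $n^{o(1)}$.

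The hardest part will be derandomizing and distributing the spectral-sparsification step: classical constructions rely on sampling edges by effective resistance combined with matrix concentration, neither of which is directly available deterministically nor straightforwardly suited to the synchronous clique model. My plan is to side-step this obstacle by invoking the recent deterministic spectral sparsifier constructions as a black box at each level of the chain, and then to show that each invocation can be implemented within the $n^{o(1)}$ per-level budget in the congested clique via standard subroutines such as deterministic routing and deterministic expander/clustering decompositions. Combining the per-iteration cost of $n^{o(1)}$ with the $O(\log(U/\epsilon))$ outer iterations then yields the claimed $n^{o(1)} \log(U/\epsilon)$ round complexity.
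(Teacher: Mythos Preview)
Your approach is genuinely different from the paper's and considerably more intricate. The paper does not build a recursive Schur-complement chain at all. Instead it exploits a feature specific to the congested clique: any graph with $\tilde O(n)$ edges can be made globally known to every node in $\tilde O(1)$ rounds. So the paper computes a \emph{single} $n^{o(1)}$-approximate spectral sparsifier $H$ of $G$ with $O(n\log n\log U)$ edges (via the Chang--Saranurak deterministic expander decomposition plugged into the Chuzhoy--Gao--Li--Nanongkai--Peng--Saranurak sparsifier-from-expanders template), broadcasts $H$ to all nodes, and then runs preconditioned Chebyshev iteration with $A=L_G$ and $B=\alpha L_H$. Each iteration's ``solve with $L_H$'' is a purely local computation because every node already holds all of $H$; only the matrix--vector product with $L_G$ requires communication, and that is $O(1)$ rounds. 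The $n^{o(1)}$ approximation factor of the sparsifier becomes the condition number $\kappa$ in Chebyshev, giving $n^{o(1)}\log(1/\epsilon)$ iterations of $O(1)$ rounds each; the $\log U$ term comes from bucketing edge weights in the sparsifier construction, not from bit complexity of the iteration.

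Your chain-of-Schur-complements plan is the template used in CONGEST and PRAM solvers precisely because those models cannot afford to make a near-linear-size graph global; in the congested clique that obstruction disappears, and the recursion is unnecessary. Your route might be made to work, but it leaves real gaps---most notably, you do not say how to approximate the Schur complement itself deterministically (choosing the eliminated set is not the hard part; the standard random-walk-based Schur approximation is what would need derandomization)---whereas the paper's one-shot sparsify-and-broadcast argument sidesteps all of this.
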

 
We apply our Laplacian solver to the maximum flow problem and the unit capacity minimum cost flow problem, again using only deterministic algorithms.\footnote{Throughout, we absorb $\poly\log U$ factors in $n^{o(1)}$ whenever $U$ appears polynomially as well.} 
\begin{restatable}{theorem}{ThmFlow}\label{thm:maxflow_detCC}
There exists a deterministic algorithm that, given a graph $G=(V,E)$ with integer capacities $u\colon E\to \{1,2,\dots, U\}$, solves the maximum flow problem in $m^{3/7+o(1)}U^{1/7}$ rounds in the congested clique. 
\end{restatable}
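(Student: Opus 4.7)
The plan is to port Madry's \cite{Madry13} electrical-flow based interior point method (IPM) for maximum flow to the congested clique and to use Theorem~\ref{thm:Laplacian_detCC} as the workhorse for its inner loop. Madry's framework reduces the maximum flow problem on a directed graph with integer capacities bounded by $U$ to $\tilde{O}(m^{3/7}U^{1/7})$ iterations of a central-path-following scheme, where each iteration is dominated by one (or a constant number of) electrical flow computations on a resistive network derived from the current primal/dual variables. Since the IPM itself is deterministic and the only randomization in the original analysis lives inside the Laplacian solver, replacing that solver with our deterministic Theorem~\ref{thm:Laplacian_detCC} makes the entire pipeline deterministic.

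First, I would implement Madry's preprocessing (edge augmentation with undirected ``preconditioner'' edges, capacity rescaling, and setting the initial feasible point) in $O(1)$ rounds, which is immediate in the congested clique thanks to all-to-all communication. Second, for each of the $\tilde{O}(m^{3/7}U^{1/7})$ IPM iterations, the electrical flow subproblem is a Laplacian system $L_G x = b$ whose solution is needed to inverse-polynomial precision $\epsilon = 1/\poly(m,U)$; by Theorem~\ref{thm:Laplacian_detCC} this costs $n^{o(1)}\log(U/\epsilon) = n^{o(1)}$ rounds, and the $L_G$-norm error guarantee of the solver translates directly into the electrical-energy error bound that Madry's perturbation analysis requires. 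Third, the remaining per-iteration operations---computing edge resistances from current potentials, inner products and coordinate-wise updates of edge- and vertex-indexed vectors, flagging ``blocked'' edges whose congestion exceeds the required threshold, and perturbing or effectively deleting such edges---are all reducible to scalar aggregations, weighted sums, sorts, and top-$k$ selections over the edge set.

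The main obstacle is that $m$ may be as large as $\Theta(n^2)$, so the edge-indexed aggregations in each iteration must fit within the per-round bandwidth of the congested clique. Assuming the standard input distribution in which every node already holds roughly $m/n$ of its incident edges, each of the required aggregations can be executed in $O(\poly\log n)$ rounds using Lenzen's deterministic routing primitive: gradient components, energy contributions, and threshold tests are each a single scalar per edge, and the total load of any node remains $\tilde O(n)$. A secondary concern is that Madry's preconditioning introduces $O(m)$ auxiliary edges, but this only blows up $m$ by a constant factor and is harmless. All per-iteration overheads are thus dominated by the $n^{o(1)}$ Laplacian solve.

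Multiplying the $\tilde{O}(m^{3/7}U^{1/7})$ iteration count by the $n^{o(1)}$ per-iteration cost, using $n \leq m$ and absorbing $\poly\log U$ factors as stipulated in the paper's footnote, yields the claimed bound of $m^{3/7+o(1)}U^{1/7}$ rounds.
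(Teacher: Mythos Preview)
Your high-level plan is essentially the paper's: run M\k{a}dry's $\tilde O(m^{3/7}U^{1/7})$-iteration IPM and plug in Theorem~\ref{thm:Laplacian_detCC} for the electrical flow solves. However, you have a genuine gap at the end. The IPM is a high-accuracy \emph{approximate} solver: its output is a fractional $s$-$t$ flow whose value is within one unit of optimal. To actually solve the maximum flow problem (with integer capacities) you still need two post-processing steps that you omit entirely.

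First, the fractional flow must be rounded to an integral flow without decreasing its value. This is precisely where the paper invokes its new Eulerian-orientation machinery (Theorem~\ref{thm:EulerianOrientation} and Lemma~\ref{lm:flowrounding_detCC}), implementing Cohen's rounding scheme in $O(\log^2 n\,\log^* n)$ rounds. This step is not a triviality in the congested clique, and the paper develops a separate algorithm for it; your proposal simply does not address how to go from the IPM's fractional output to an integral flow. Second, after rounding one still needs a single augmenting path in the residual graph to reach optimality; the paper handles this via the $O(n^{0.158})$-round directed APSP of Censor-Hillel et al.\ \cite{Censor-HillelKK19}. Neither step dominates the final round count, but both are required for correctness, and the rounding step in particular is one of the paper's main technical contributions. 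As written, your argument only yields an approximate fractional flow, not an exact maximum flow.

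A minor point: the relevant reference is \cite{Madry16}, not \cite{Madry13}; the $m^{3/7}U^{1/7}$ iteration bound and the specific boosting/fixing structure the paper implements come from the 2016 paper.
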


\begin{restatable}{theorem}{ThmMinCostFlow}\label{thm:mincostflow_detCC}
There exists a deterministic algorithm that, given a graph $G=(V,E)$ with unit capacities and integer costs $c\colon E\to \{1,2,\dots, W\}$, and a demand vector $\vec{\sigma}\colon V \to \Z$, solves the minimum cost flow problem in $\tilde O(m^{3/7}(n^{0.158}+n^{o(1)}\poly\log W))$ rounds in the congested clique. 
\end{restatable}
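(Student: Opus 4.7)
The plan is to instantiate a known $\tilde O(m^{3/7})$-iteration interior point method (IPM) for unit-capacity minimum cost flow and execute each of its iterations in the congested clique. The natural template is the framework underlying the sequential $m^{3/7}$-time algorithms of Madry and Cohen--Madry--Sankowski--Vladu for unit capacity flows, in which each IPM step reduces to an approximate electrical flow computation (i.e.\ a Laplacian solve) on a reweighted copy of $G$, together with a modest amount of auxiliary linear algebra and an edge-fixing / rounding procedure at the end.

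More concretely, I would proceed in three layers. First, cast the instance into the IPM, obtaining $\tilde O(m^{3/7} \poly\log W)$ outer iterations; each iteration updates resistances on the edges of $G$, solves a Laplacian system on the current weighted graph to compute a potential vector, and translates this into an updated fractional flow. For each such solve we invoke Theorem~\ref{thm:Laplacian_detCC} with $\epsilon=1/\poly(n,W)$, at a cost of $n^{o(1)}\poly\log W$ rounds per solve; summing over iterations contributes the $\tilde O(m^{3/7})\cdot n^{o(1)}\poly\log W$ term. Second, the linear-algebra bookkeeping required by the IPM (projection onto cycle space, leverage-score style computations, or the dense preconditioning matrices used when edges get fixed) is carried out via congested clique matrix multiplication, which can be performed deterministically in $O(n^{1-2/\omega})=O(n^{0.158})$ rounds (Censor-Hillel et al.). Amortizing this over the $\tilde O(m^{3/7})$ iterations gives the $\tilde O(m^{3/7}\cdot n^{0.158})$ term. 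Third, after the IPM terminates we round its approximately optimal fractional flow to an integral optimum by canceling residual cycles; this is where the deterministic Eulerian orientation routine advertised in the abstract is used, at a cost that is absorbed by the IPM itself.

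The main obstacle will be realizing the per-iteration work deterministically in the congested clique, because most sequential implementations rely on randomization in two places: (a) producing spectral sparsifiers or sketches of resistance-weighted subgraphs, and (b) rounding a nearly-integral flow via randomized matching or random-walk arguments. Point (a) is handled cleanly by Theorem~\ref{thm:Laplacian_detCC}, whose precision $\log(U/\epsilon)$ dependence is exactly what the IPM needs; point (b) is where deterministic Eulerian orientations come in, allowing us to decompose the residual integral flow into cycles and cancel them without appealing to random choices. A subtler issue is that the IPM maintains per-edge and per-vertex state vectors that must remain consistent across the clique; since each machine already stores its incident edges, the standard congested clique encoding of $O(\log n)$-bit values per edge/vertex per round suffices to update and broadcast these quantities within the claimed bound.

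Putting the three layers together, and verifying that every intermediate quantity (resistances, potentials, electrical flows, fixed edges) is maintained at the required precision, yields total round complexity $\tilde O\!\bigl(m^{3/7}(n^{0.158}+n^{o(1)}\poly\log W)\bigr)$, establishing Theorem~\ref{thm:mincostflow_detCC}.
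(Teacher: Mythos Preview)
Your high-level plan is right---run the CMSV interior point method, implement each progress step as a Laplacian solve via Theorem~\ref{thm:Laplacian_detCC}, and finish by rounding---but you have misidentified where the $n^{0.158}$ term comes from, and in doing so you have skipped the only genuinely nontrivial post-IPM step.

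In the actual algorithm, each IPM iteration consists solely of (i) a constant number of Laplacian solves and (ii) coordinate-wise updates to $\vec f$, $\vec s$, $\vec y$, $\vec \nu$; there is no projection onto cycle space, no leverage-score computation, and no dense preconditioning that would call for fast matrix multiplication. So your ``second layer'' is not needed, and the IPM phase costs only $\tilde O(m^{3/7})\cdot n^{o(1)}\poly\log W$ rounds. The $n^{0.158}$ term arises \emph{after} the IPM, in the repairing phase: the CMSV method is a \emph{low}-precision IPM, so after rounding the fractional flow to an integral one (this is where Eulerian orientations are used, via Lemma~\ref{lm:flowrounding_detCC}), the resulting integral flow can still be $\tilde O(m^{3/7})$ augmentations away from optimal. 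Each augmentation requires a shortest-path computation in the residual graph, and that is what is done in $O(n^{0.158})$ rounds using the approximate directed APSP of Censor-Hillel et al. Summing gives $\tilde O(m^{3/7})\cdot n^{0.158}$.

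So your claim that the rounding/cleanup phase ``is absorbed by the IPM itself'' is exactly backwards: it is the dominant term whenever $n^{0.158}$ exceeds $n^{o(1)}\poly\log W$. To fix the proposal, remove the matrix-multiplication layer entirely and replace it with an explicit accounting of the $\tilde O(m^{3/7})$ shortest-path augmentations in the repairing step.
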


Using the other techniques presented in this paper, but replacing the Laplacian solver by a simpler, randomized solver (see~\cite{ForsterV22}), we can convert the $n^{o(1)}$ in both flow theorems above into a $\poly\log n$ factor. 

As a subroutine for both flow algorithms, we develop an algorithm that computes Eulerian orientations. The problem is, given a \emph{Eulerian graph}, i.e., a graph where each vertex has even degree, to compute a \emph{Eulerian orientation}, i.e., an orientation of the edges such that each vertex has equally many edges entering as exiting.

\begin{restatable}{theorem}{ThmEulerianOrientation}\label{thm:EulerianOrientation}
    There exists a deterministic congested clique algorithm that, given a Eulerian graph $G=(V,E)$, finds a Eulerian orientation in $O(\log n \log^* n)$ rounds. 
\end{restatable}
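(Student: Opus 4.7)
My plan is to use the classical \emph{edge-pairing} construction. First, every vertex $v$ partitions its (even number of) incident edges arbitrarily into $d(v)/2$ disjoint pairs; this is purely local and requires no communication. Let $H$ be the $2$-regular auxiliary multigraph on vertex set $E(G)$ in which two $G$-edges are adjacent iff they are paired at a common endpoint in $G$. Each connected component of $H$ is a cycle, and such a cycle corresponds to a closed walk in $G$ that traverses its $G$-edges exactly once. If we orient every $G$-edge in the direction of traversal of its closed walk, then at every vertex $v$ each pair contributes exactly one incoming and one outgoing arc, so the result is an Eulerian orientation. The remaining task reduces to orienting every cycle of $H$ consistently.

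To implement this on the congested clique, I would exploit the fact that each $H$-vertex $e=uv$ is known to both $u$ and $v$, and every $H$-edge lies at a common $G$-vertex, so that one step along $H$ can be simulated in $O(1)$ congested-clique rounds. On this implicit representation I would run the deterministic Cole--Vishkin $3$-coloring of the cycles of $H$ in $O(\log^{*}n)$ rounds, then pick a color class of size $\geq |V(H)|/3$ as an $H$-independent set and contract each of its vertices with a designated $H$-neighbor, orienting the involved $G$-edge consistently with the local traversal at the contraction point. Each contraction round shrinks every cycle of $H$ by a constant factor, so $O(\log n)$ such iterations (each preceded by a fresh $O(\log^{*}n)$-round recoloring) reduce every cycle to constant length, at which point the orientation is fixed directly and then propagated back through the contraction history at no additional asymptotic cost.

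The main obstacle is verifying that a single contraction round really runs in $O(\log^{*}n)$ congested-clique rounds even though $H$ may have up to $\Theta(m)$ vertices and its adjacency is not a subgraph of the clique. This requires checking (i) that each Cole--Vishkin round on $H$ costs only $O(1)$ congested-clique rounds, because every $H$-edge is owned by a common $G$-vertex that can relay messages locally; (ii) that each congested-clique node holds only $O(n)$ $H$-vertices, so all contraction-phase communication fits within Lenzen-style routing bandwidth; and (iii) that the identifiers of the contracted ``super-vertices'' and their surviving cycle structure can be tracked consistently at both endpoints of every $G$-edge, which is where a fixed global tie-breaking rule (e.g., orient from smaller to larger endpoint ID at the contraction point) is needed to avoid inconsistencies. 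Once these points are carried through, the $O(\log n)$ contraction rounds at $O(\log^{*}n)$ rounds each yield the claimed $O(\log n\cdot \log^{*}n)$-round deterministic algorithm.
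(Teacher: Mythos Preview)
Your proposal follows essentially the paper's approach: pair edges locally to obtain a decomposition into cycles, shrink each cycle by a constant factor per phase using Cole--Vishkin symmetry breaking, absorb the resulting non-local communication via Lenzen routing (the decisive load bound being that every $G$-vertex handles at most $n$ cycle positions throughout), and finally fix and back-propagate the orientation. Your framing via the auxiliary $2$-regular graph $H$ on $E(G)$ together with contraction of an independent set is the dual of the paper's framing via closed walks on $V(G)$ together with selection of surviving ``active'' vertices through a maximal matching; the paper's inactive vertices relaying for at most four hops corresponds exactly to your super-vertices absorbing a bounded number of contracted $H$-vertices. Your $H$-formulation is in fact a bit cleaner, since it sidesteps the issue that a single $G$-vertex may occur several times on the same closed walk.

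There is, however, one concrete gap. Choosing a \emph{single global} color class of size at least $|V(H)|/3$ does not guarantee that \emph{every} cycle shrinks by a constant factor: an even-length component of $H$ can be properly $2$-colored by the Cole--Vishkin output and contain no vertex of the globally selected third color, so that component would not shrink at all, and since the algorithm is deterministic the same thing recurs in every phase, breaking the $O(\log n)$ bound. The remedy is precisely what the paper does: instead of taking a raw color class, spend $O(1)$ additional rounds turning the $3$-coloring into a \emph{maximal} matching (or maximal independent set) on each cycle. Maximality is a purely local property and forces at least a $1/3$ fraction of the vertices on \emph{each} cycle to be matched, which restores the per-cycle constant-factor shrinkage your argument needs.
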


\subsection{Comparison to Related Work}

\paragraph{Distributed Laplacian Solvers}
Forster, Goranci, Liu, Peng, Sun, and Ye~\cite{FGLP+20} gave an almost existentially optimal Laplacian solver in the CONGEST model. They solve a Laplacian system up to precision $\epsilon$ in $n^{o(1)}(\sqrt n+D)\log(1/\epsilon)$ rounds, and they give a lower bound of $\Omegat(\sqrt n+D)$ for any $\epsilon\leq \tfrac{1}{2}$. Subsequently, Anagnostides, Lenzen, Haeupler, Zuzic, and Gouleakis~\cite{Anagnostides0HZ22} showed that this is not \emph{instance} optimal. They provide a solver that takes $n^{o(1)}\poly(SQ(G))\log(1/\epsilon)$ rounds for a graph $G$, where $SQ(D)$ is the shortcut quality of $G$, see \cite{GhaffariH16} for a definition. For a range of graphs (including planar, expander, and excluded-minor graphs) they provide an $n^{o(1)}SQ(G)\log(1/\epsilon)$ round algorithm. This is almost \emph{universally} optimal: for a graph $G$, they show a lower bound of $\Omegat(SQ(G))$, for any $\epsilon\leq \tfrac{1}{2}$.

For the Broadcast Congested Clique, Forster and de Vos~\cite{ForsterV22} obtained an algorithm in the $\poly \log n$ regime: they give a randomized algorithm taking $O(\poly\log(n)\log(1/\epsilon))$ rounds.

\paragraph{Distributed Flow Algorithms}
In the distributed setting, there are three exact flow algorithms at the time of writing; two for the CONGEST model~\cite{FGLP+20} and one for the Broadcast Congested Clique~\cite{ForsterV22}. In the CONGEST model, the maximum flow algorithm takes $\tilde O(m^{3/7}U^{1/7}(n^{o(1)}(\sqrt n+D)+\sqrt n D^{1/4})+\sqrt m)$ rounds and the unit capacity minimum cost flow algorithm takes $\tilde O(m^{3/7+o(1)}(\sqrt nD^{1/4}+D)\poly\log W )$ rounds. In the Broadcast Congested Clique, the minimum cost flow algorithm takes $\tilde O(\sqrt n)$ rounds.  Clearly all these algorithms can be implemented in the congested clique, as both other models are more restrictive. The CONGEST algorithms are clearly always slower than ours. Essentially, the outer procedures are the same (the interior point method algorithms of M\k{a}dry~\cite{Madry16} and Cohen, M\k{a}dry, Sankowski, and Vladu~\cite{CMSV17} respectively), and we keep the number of iterations $m^{3/7}U^{1/7}$ and $\tilde O(m^{3/7})$ respectively, but we bring down the cost of each iteration to $n^{o(1)}$ and $\tilde O(n^{0.158}+n^{o(1)}\poly\log W)$. The Broadcast Congested Clique algorithm is faster than our algorithms for sufficiently dense graphs. However, both these algorithms are randomized: both Laplacian solvers are randomized, and in the Broadcast Congested Clique algorithm also uses randomization in its interior point method itself.

For the deterministic setting, to the best of our knowledge, there are only two algorithms against which we should compare ourselves. First of all, there is the trivial $O(n\log U)$ round algorithm that makes all knowledge global, such that the problem can be solved internally at each node. 

Secondly, we compare ourselves against the Ford-Fulkerson algorithm~\cite{Ford1956maximal}. Let $|f^*|$ denote the value of the maximum flow, then the Ford-Fulkerson algorithm takes $|f^*|$ iterations, where each iteration consists of solving an $s$-$t$ reachability problem. We can solve such a problem in $O(n^{0.158})$ rounds~\cite{Censor-HillelKK19}, giving us a total round complexity of $O(|f^*|n^{0.158})$. Note that the value of the maximum flow $|f^*|$ can be as big as $nU$, while this is only more efficient than the trivial algorithm above when $|f^*|=o(n^{0.842}\log U)$. 

For completeness, let us include here the CONGEST model approximate maximum flow algorithm by Ghaffari, Karrenbauer, Kuhn, Lenzen, and Patt-Shamir~\cite{GhaffariKKLP18}. They give an $n^{o(1)}(\sqrt n+D)/\epsilon^3$ round algorithm for $(1+\epsilon)$-approximate maximum flow in weighted \emph{undirected} graphs. 

\paragraph{Flow Algorithms: Iteration Count and Round Complexity}
In recent years, interior point methods have booked many successes in solving flow problems, with most recently Chen, Kyng, Liu, Peng, Probst Gutenberg, and Sachdeva~\cite{ChenKLPPS22} solving minimum cost flow in $m^{1+o(1)}$ time. Their algorithm uses $m^{1+o(1)}$ iterations, where each iteration can be implemented in $m^{o(1)}$ (amortized) time. While this is efficient in a central setting, it does not seem suitable for a distributed setting. The reason hereto is that, unless they can be performed completely internally, each iteration consists of at least one round. Therefore, the algorithms with low iteration count are more suitable, such as the aforementioned algorithms of~\cite{Madry16,CMSV17,LS14}. The challenge here is to show that each iteration can be implemented efficiently in the distributed model, which we show is the case in the deterministic congested clique if each iteration essentially consists of a small number of Laplacian solves.

\paragraph{Eulerian Orientations}
To the best of our knowledge, there are only two results concerning Eulerian orientations in the distributed setting. Both are deterministic CONGEST model algorithms. The first is the $\tilde O(\sqrt m+D)$ round algorithm of Forster, Goranci, Liu, Peng, Sun, and Ye~\cite{FGLP+20}. The second is the improved $\tilde O(\sqrt n +D)$ round algorithm of Rozhon, Grunau, Haeupler, Zuzic, and Li~\cite{RozhonGHZL22}. Eulerian orientations have both been used as a subroutine for flow computation~\cite{FGLP+20} via the algorithm of Cohen~\cite{Cohen95}, and for shortest path computation~\cite{RozhonGHZL22}. Hence we believe that our congested clique algorithm may find applications beyond flow computation. 

We note that computing Eulerian orientations seems to be a hard problem in the Broadcast Congested Clique. The reason hereto is that each node can be part of many cycles, but for an efficient algorithm a node is only allowed to send designated messages for a few of them, which means there is no natural implementation of Cohen's algorithm. Hence, flow rounding also seems hard in the Broadcast Congested Clique. In their minimum cost flow algorithm for the Broadcast Congested Clique, Forster and de Vos~\cite{ForsterV22} avoid this problem by using the powerful tool of an LP solver. This solver gives an approximate solution so close to the optimal solution that internal rounding suffices. This technique has a $\tilde O(\sqrt{n})$ iteration count and uses randomized subroutines. As far as we know, computing Eulerian orientations is the bottleneck for implementing the flow algorithms of M\k{a}dry~\cite{Madry16} and Cohen, M\k{a}dry, Sankowski, and Vladu~\cite{CMSV17} in the Broadcast Congested Clique. Finally, let us note that as far as we know the fastest way to solve the Eulerian orientation problem in the Broadcast Congested Clique is to utilize the LP solver of Forster and de Vos for a $\tilde O(\sqrt n)$ round algorithm.

\subsection{Technical Overview}
In this section, we first describe our algorithm for a Laplacian solver, and for Eulerian orientations. The latter immediately leads to a flow rounding procedure. Together this gives the maximum flow algorithm and the minimum cost flow algorithm. 

\paragraph{A Laplacian Solver Through Spectral Sparsification}
Since we operate within a clique, we make use of the fact that once we have a spectral sparsifier, this can easily be made known to every node, since the graph is so sparse. Hence then using standard techniques, the Laplacian system can be solved (approximately) internally, and with $\log(1/\epsilon)$ iterations up to precision $\epsilon$. So the intermediate goal is to compute a deterministic spectral sparsifier. Hereto, we use the  work of Chang and Saranurak~\cite{CS20}, who show that we can efficiently compute a deterministic expander decomposition in the CONGEST model, so in particular in the congested clique. Then, we use the result from Chuzhoy, Gao, Li, Nanongkai, Peng, and Saranurak~\cite{CGL+20} that one can deterministically create a sparsifier by computing multiple expander decompositions. This technique is developed for a central setting, but we show it can be made to work in the congested clique as well. In total we obtain an $n^{o(1)}$-approximate spectral sparsifier in $n^{o(1)}$ rounds, and hence also an $n^{o(1)}\log(1/\epsilon)$ round $\epsilon$-approximate Laplacian solver. 

\paragraph{Flow Rounding and Eulerian Orientations}
Cohen~\cite{Cohen95} showed that flow rounding can be reduced to computing Eulerian orientations. We provide an $O(\log n\log^* n)$ round algorithm as follows. Each node starts by internally dividing its adjacent edges into pairs. This means we have an implicit decomposition into cycles. 

First, let us describe the procedure for a single such cycle. The goal is to reduce the cycle to fewer and fewer nodes, until only a leader remains. We do this by repeatedly selecting a constant fraction of the nodes on each cycle, and replacing the nodes in between by a new edge. Using randomization, one could simply sample each node with constant probability to obtain a set of selected nodes. For a deterministic algorithm, we use a $O(\log^* n)$ matching algorithm~\cite{CV86,GPS87}, and select the node with higher $\ID$ for each matched edge. In total, we have $O(\log n)$ iterations, where we select nodes in $O(\log^* n)$ rounds, and communicate in $O(1)$ rounds. At the end of this routine, one node is `aware' of the entire cycle -- it does not know it explicitly -- and decides on an orientation. Running the algorithm in reverse communicates this orientation to all nodes on the cycle. 

To perform this procedure on all cycles simultaneously, we use the routing algorithm of Lenzen~\cite{Lenzen13}, which shows that each communication step can be done efficiently in a constant number of rounds, even if some nodes are selected for many cycles. 

\paragraph{Maximum Flow}
For computing maximum flow, we use M\k{a}dry's algorithm~\cite{Madry16}. This is based on an interior point method with $\tilde O(m^{3/7}U^{1/7})$ iterations. In each iteration we solve an electrical flow instance. This can be done using the deterministic Laplacian solver mentioned above in $n^{o(1)}$~rounds. At the end of the interior point method, we have a high precision approximate solution. Using the above rounding algorithm, we round the fractional flow to an integral flow, which is almost optimal. It remains to find a single augmenting path to obtain the maximal flow.  This can be done in $O(n^{0.158})$~rounds using an algorithm from Censor-Hillel, Kaski, Korhonen, Lenzen, Paz, and Suomela~\cite{Censor-HillelKK19}. In the end we obtain an $m^{3/7+o(1)}U^{1/7}$ round algorithm for computing deterministic maximum flow, where $U$ is the maximum capacity of the graph. 

\paragraph{Unit Capacity Minimum Cost Flow}
For computing minimum cost flow in graphs with unit capacities, we use Cohen, M\k{a}dry, Sankowski, and Vladu's algorithm~\cite{CMSV17}. Just like the maximum flow algorithm above, this is based on an interior point method. The algorithm uses $\tilde O(m^{3/7}\poly\log W)$ iterations, where $W$ is the maximum cost. Each iteration can be solved using our deterministic Laplacian solver in $n^{o(1)}$~rounds. At the end of the interior point method, we have a low precision approximate solution: after rounding the flow to be integral using our algorithm mentioned above, we need $\tilde O(m^{3/7})$ augmenting paths to obtain the optimal flow. 
Each such path can be computed in $O(n^{0.158})$ rounds~\cite{Censor-HillelKK19}. In the end we obtain an $\tilde O(m^{3/7}(n^{0.158}+n^{o(1)}\poly\log W))$ round algorithm for computing deterministic unit capacity minimum cost flow.

\section{Preliminaries}\label{sc:prelim}
First we detail the model we work with. Next, we review spectral sparsifiers, and we show how they can be used for solving Laplacian systems. Finally, we formally introduce the maximum flow and minimum cost flow problems.

\subsection{The Computational Model}
The model we work with in this paper is called the \emph{congested clique}~\cite{LPSPP05}. It consists of a network of $n$ processors, which communicate in synchronous rounds. In each round, each processor (also called node) can send and receive a message of size $O(\log n)$ to/from each other processor over a non-faulty link. Each node has a unique identifier of size $O(\log n)$, initially only known by the node itself and its neighbors. Computational complexity is measured by the number of rounds a computation takes. At input to graph problems, nodes receive as input the edges incident to them, together with the corresponding capacities and costs.

There are two other models mentioned in this paper. First, there is the \emph{CONGEST model}~\cite{Peleg00}, where we have the additional restriction that nodes can only exchange message with their neighbors in the given network topology. 
Second, there is the \emph{Broadcast Congested Clique}~\cite{DKO12}, which is the congested clique with the additional constraint that in each round each node has to send \emph{the same message} to all other nodes. 

\subsection{Spectral Sparsification}\label{sc:prelim_spanners_sparsifiers}
Typical constructions of Laplacian solvers often involve the computation of a \emph{spectral sparsifier}, first introduced by Spielman and Teng~\cite{ST11}. This is a subgraph of which the Laplacian matrix approximates the Laplacian matrix of the input graph. Then solving the system using this matrix gives an approximate solution. More formally, we introduce spectral sparsifiers as follows. 
\begin{definition}
\label{def:specspars}
Let $G=(V,E)$ be a graph with weights $w_G\colon E\to \R$, and $n=|V|$, and let $\alpha\geq 1$ be a parameter. We say that a subgraph $H\subseteq G$ with weights $w_H\colon E(H)\to \R$ is an $\alpha$-approximate spectral sparsifier for $G$ if we have for all $x\in\R^n$:
\begin{align} \tfrac{1}{\alpha} x^TL_Hx \leq x^T L_G x\leq \alpha x^T L_Hx, \label{eq:specspars}\end{align}
where $L_G$ and $L_H$ are the Laplacians of $G$ and $H$ respectively.
\end{definition}

We introduce the short-hand notation $A \preccurlyeq B$ when $B-A$ is positive semi-definite. This reduces \autoref{eq:specspars} to $\tfrac{1}{\alpha}L_H \preccurlyeq L_G \preccurlyeq \alpha L_H$. 

Before we turn to Laplacian solving, let us give two more definitions. Given $x\in \R^n$ and $A\in \R^{n\times n}$, we define the norm of $x$ with respect to $A$ by $||x||_A:= \sqrt{x^T A x}$. Further, for $A\in \R^{n\times n}$ we write $A^\dagger$ for the \emph{pseudoinverse} of $A$. If we assume $\lambda_1, \dots, \lambda_r$ are the nonzero eigenvalues of $A$, with corresponding eigenvectors $u_1, \dots, u_r$, then $A^\dagger$ is defined by 
$$ A^\dagger := \sum_{i=1}^r \tfrac{1}{\lambda_i}u_i u_i^T.$$

\subsection{Laplacian Solving}\label{sc:laplacian}
We consider the following problem. Let $L_G$ be the Laplacian matrix for some graph $G$ on $n$ nodes. Given $b\in \mathbb R^n$, we want to solve $L_Gx=b$. Solving a Laplacian system exactly can be computationally demanding. 
Therefore, we consider an approximation to this problem: we want to find $y\in R^n$ such that $||x-y||_{L_G}\leq \epsilon||x||_{L_G}$. We compute such a $y$ by using a spectral sparsifier for $G$. We use the following theorem on \emph{preconditioned Chebyshev iteration}, a well known fact in numerical analysis~\cite{Saa03,axelsson_1994}. We use the formulation of Peng~\cite{peng13}.
\begin{theorem}[Preconditioned Chebyshev Iteration]
\label{thm:PCI}
There exists an algorithm \textsc{PreconCheby} such that for any symmetric positive semi-definite matrices $A, B$, and $\kappa$ where
\[A \preccurlyeq B \preccurlyeq \kappa A\]
any error tolerance $0\leq \epsilon \leq 1/2$, \textsc{PreconCheby}($A,B,b,\kappa$) in the exact arithmetic model is a
symmetric linear operator on $b$ such that:
\begin{enumerate}
\item\label{it:propZ} If $Z$ is the matrix realizing this operator, then $(1 -\epsilon)A^\dagger \preccurlyeq Z \preccurlyeq (1 + \epsilon)A^\dagger$;
\item For any vector $b$, \textsc{PreconCheby}($A,B, b,\epsilon$) takes $O(\sqrt{\kappa}\log (1/\epsilon))$ iterations, each
consisting of a matrix-vector multiplication by $A$, a solve involving $B$, and a constant number of vector operations.
\end{enumerate}
\end{theorem}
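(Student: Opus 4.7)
The plan is to follow the classical derivation of preconditioned Chebyshev iteration. The first step is to reduce to an ordinary polynomial-approximation problem: since $A \preccurlyeq B \preccurlyeq \kappa A$ forces $\ker A = \ker B$, I would restrict attention to their common image and form the symmetric matrix $M := B^{-1/2} A B^{-1/2}$, whose nonzero spectrum lies in $[\kappa^{-1}, 1]$, and observe that $A^\dagger = B^{-1/2} M^\dagger B^{-1/2}$. Approximating $A^\dagger b$ therefore reduces to approximating $M^\dagger$ by a polynomial in $M$, which can be applied in the factored form $q_{k-1}(B^{-1}A)\, B^{-1}$ using one multiplication by $A$ and one solve with $B$ per degree.

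Next I would invoke the minimax Chebyshev approximation on $[\kappa^{-1}, 1]$: there is a polynomial $q_{k-1}$ of degree $k-1$ for which $|1 - t\, q_{k-1}(t)| \leq 2 \rho^k$ holds uniformly on that interval, where $\rho = (\sqrt{\kappa}-1)/(\sqrt{\kappa}+1)$. Choosing $k = \Theta(\sqrt{\kappa}\log(1/\epsilon))$ drives this error below $\epsilon$, giving the stated iteration count. To keep each iteration cheap I would not evaluate $q_{k-1}$ from scratch but use the three-term recurrence of the Chebyshev polynomials, so that each successive iterate is produced from the previous two with one multiplication by $A$, one solve with $B$, and a constant number of vector additions and scalings.

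Finally, the two listed properties need to be verified. Symmetry of the realized operator $Z$ is immediate once one rewrites $Z = q_{k-1}(B^{-1}A)\, B^{-1} = B^{-1/2} q_{k-1}(M) B^{-1/2}$, which a short induction over the Chebyshev recurrence confirms is precisely the linear map the algorithm applies to $b$. The spectral sandwich $(1-\epsilon) A^\dagger \preccurlyeq Z \preccurlyeq (1+\epsilon) A^\dagger$ reduces, under conjugation by $B^{-1/2}$, to $(1-\epsilon)M^\dagger \preccurlyeq q_{k-1}(M) \preccurlyeq (1+\epsilon) M^\dagger$, which follows by diagonalizing $M$ and applying the uniform polynomial error bound eigenvalue by eigenvalue. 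The main subtlety I anticipate is a bookkeeping one: when $A$ and $B$ are only positive semidefinite (as for Laplacians), one has to ensure the iteration is initialized and maintained on their common image so that $A^\dagger$ and $B^\dagger$ act like genuine inverses there and the symmetric factorization through $B^{-1/2}$ is legitimate.
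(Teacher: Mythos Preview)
Your derivation is correct and is the classical argument for preconditioned Chebyshev iteration. However, the paper does not prove this theorem at all: it is stated as a known result from numerical analysis, with references to \cite{Saa03,axelsson_1994} and the specific formulation attributed to Peng~\cite{peng13}. The paper treats \textsc{PreconCheby} entirely as a black box and proceeds directly to the corollary on Laplacian solving.

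So there is nothing to compare against: you have supplied a full proof where the paper simply cites one. Your sketch is sound, including the handling of the semidefinite case via the shared kernel of $A$ and $B$, the reduction to the symmetric matrix $M = B^{-1/2} A B^{-1/2}$ on the common image, the Chebyshev minimax bound on $[\kappa^{-1},1]$ giving the $O(\sqrt{\kappa}\log(1/\epsilon))$ iteration count, and the verification of both the symmetry of $Z$ and the spectral sandwich via eigenvalue-by-eigenvalue application of the uniform polynomial error. The bookkeeping concern you flag about restricting to the image is exactly the right thing to be careful about, and your treatment of it is adequate.
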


This yields the following corollary for Laplacian solving using spectral sparsifiers. 

\begin{corollary}
\label{cor:laplaciansolving}
Let $G$ be a weighted graph on $n$ vertices, let $\epsilon\in[0,1/2]$ be a parameter, and let $b\in \R^n$ a vector. Suppose $H$ is an $\alpha$-approximate spectral sparsifier for $G$, for some $\alpha \geq 1$. Then there exists an algorithm \textsc{LaplacianSolve}($G,H,b,\epsilon$) that outputs a vector $y\in \R^n$ such that $||x-y||_{L_G}\leq \epsilon||x||_{L_G}$, for some $x\in R^n$ satisfying $L_Gx=b$. \textsc{LaplacianSolve}($G,H,b,\epsilon$) takes $O(\sqrt{\alpha}\log(1/\epsilon))$ iterations, each consisting of a matrix-vector multiplication by $L_G$, a solve involving $L_H$, and a constant number of vector operations. 
\end{corollary}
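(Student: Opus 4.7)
The plan is to invoke \autoref{thm:PCI} with $A = L_G$ and $B$ a suitably scaled copy of $L_H$, so that the $\alpha$-approximate spectral sparsifier inequality $\tfrac{1}{\alpha} L_H \preccurlyeq L_G \preccurlyeq \alpha L_H$ translates into the hypothesis $A \preccurlyeq B \preccurlyeq \kappa A$ with $\kappa$ depending only on $\alpha$. Concretely, taking $B = \alpha L_H$ makes $A \preccurlyeq B$ immediate from the right half of the sparsifier bound, and combining this with the left half gives $B \preccurlyeq \kappa A$. The algorithm \textsc{LaplacianSolve}$(G,H,b,\epsilon)$ then just returns $y := \textsc{PreconCheby}(A,B,b,\epsilon)$, and plugging the resulting $\kappa$ into the iteration count of \autoref{thm:PCI} yields the claimed $O(\sqrt{\alpha}\log(1/\epsilon))$ iterations.

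Next, I would translate the operator-level guarantee of \autoref{thm:PCI} into the stated error bound in the $L_G$-norm. Let $Z$ be the symmetric operator realized by \textsc{PreconCheby}, so that $(1-\epsilon) L_G^\dagger \preccurlyeq Z \preccurlyeq (1+\epsilon) L_G^\dagger$ and $y = Zb$. Setting $x = L_G^\dagger b$, I would expand
\begin{align*}
\|x-y\|_{L_G}^2 = b^T (Z - L_G^\dagger)\, L_G\, (Z - L_G^\dagger)\, b
\end{align*}
and combine the sandwich bound $-\epsilon L_G^\dagger \preccurlyeq Z - L_G^\dagger \preccurlyeq \epsilon L_G^\dagger$ with the identity $L_G^\dagger L_G L_G^\dagger = L_G^\dagger$ on the image of $L_G$ to conclude $\|x-y\|_{L_G} \leq \epsilon \|x\|_{L_G}$.

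The per-iteration structure is immediate from \autoref{thm:PCI}: each iteration consists of a multiplication by $A = L_G$, a solve with $B = \alpha L_H$ (equivalent up to a constant scalar to a solve with $L_H$), and $O(1)$ vector operations. The main technical care concerns null spaces: both $L_G$ and $L_H$ annihilate the all-ones vector on each connected component, so pseudoinverses must be understood relative to the image of the respective Laplacian, and one should restrict attention to $b \in \operatorname{Im}(L_G)$. I would verify that the \textsc{PreconCheby} iterates remain inside this subspace when initialized appropriately, which is routine given the symmetry of $L_G$ and $L_H$ and their shared kernel.
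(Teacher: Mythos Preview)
Your setup is exactly the paper's: take $A=L_G$, $B=\alpha L_H$, read off $A\preccurlyeq B\preccurlyeq \alpha^2 A$ from the two halves of the sparsifier inequality, and hand everything to \textsc{PreconCheby}. The per-iteration description and the null-space remarks are also in line with (indeed, slightly more careful than) what the paper writes.

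The one place where your sketch is thinner than it looks is the passage from the operator sandwich $(1-\epsilon)L_G^\dagger\preccurlyeq Z\preccurlyeq(1+\epsilon)L_G^\dagger$ to the bound $\|x-y\|_{L_G}\le\epsilon\|x\|_{L_G}$. Writing ``combine the sandwich bound with $L_G^\dagger L_G L_G^\dagger=L_G^\dagger$'' is not yet an argument: a Loewner bound on $M:=Z-L_G^\dagger$ does not automatically transfer to $M L_G M$ when $M$ and $L_G$ fail to commute. One clean way to complete your line is to conjugate by $L_G^{1/2}$ on the image of $L_G$, deduce $\|L_G^{1/2} M L_G^{1/2}\|_{\mathrm{op}}\le\epsilon$, and then observe $b^T M L_G M b=\|L_G^{1/2} M L_G^{1/2}\cdot L_G^{1/2}x\|^2\le\epsilon^2\|x\|_{L_G}^2$. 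The paper takes a different tack: it expands $b^T(A^\dagger AA^\dagger-A^\dagger AZ-ZAA^\dagger+ZAZ)b$ and shows that the product $\bigl(AZA-(1-\epsilon)AA^\dagger A\bigr)\cdot A^\dagger\cdot\bigl((1+\epsilon)AA^\dagger A-AZA\bigr)$ of three positive semidefinite matrices is symmetric, hence positive semidefinite, which rearranges to $ZAZ-ZAA^\dagger-A^\dagger AZ+A^\dagger\preccurlyeq\epsilon^2 A^\dagger$. The paper's acknowledgments explicitly note that this step originally contained an error and had to be repaired, so whichever route you take, spell it out in full.
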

\begin{proof}
As $H$ is an $\alpha$-approximate spectral sparsifier for $G$, we have: $\tfrac{1}{\alpha}L_H \preccurlyeq L_G \preccurlyeq \alpha L_H$, which we can rewrite to 
\[ L_G \preccurlyeq \alpha L_H \preccurlyeq \alpha^2 L_G.\]
We set $A:=L_G$ and $B:=\alpha L_H$, which are clearly both symmetric positive semi-definite. Furthermore, we set $\kappa:= \alpha$. We apply \autoref{thm:PCI} with these settings to obtain our vector $y=Zb$. Now we compare the result to $x=A^\dagger b$:
\begin{align*}
||x-y||_A &= \sqrt{(x-y)^T A(x-y)} \tag*{by definition of $||\cdot||_A$} \\
&= \sqrt{(A^\dagger b-Zb)^T A(A^\dagger b-Zb)}\\
&= \sqrt{b^T(A^\dagger A A^\dagger- A^\dagger A Z-ZAA^\dagger+ZAZ )b} \tag*{since $A^\dagger$ and $Z$ are symmetric} \\
&\leq \sqrt{b^T(A^\dagger A A^\dagger-2(1-\epsilon)A^\dagger A A^\dagger+(1+\epsilon)^2A^\dagger A A^\dagger)b} \tag*{by property~\ref{it:propZ} of $Z$}\\
&\leq \sqrt{\epsilon^2 x^TAx} \tag*{by the argument below}\\
&= \epsilon ||x||_A \tag*{by definition of $||\cdot||_A$.}
\end{align*}

It remains to show that $b^T(A^\dagger A A^\dagger-2(1-\epsilon)A^\dagger A A^\dagger+(1+\epsilon)^2A^\dagger A A^\dagger)b\leq \epsilon^2 x^TAx$. By property~\ref{it:propZ} of $Z$, we know that
$Z - (1-\varepsilon)A^\dagger \succeq 0$, $A \succeq 0$, and $(1+\varepsilon)A^\dagger - Z \succeq 0$ holds, thus, it follows that $AZA - (1-\varepsilon) A A^\dagger A\succeq 0$, $A^\dagger \succeq 0$, and
$(1+\varepsilon) A A^\dagger A - AZA \succeq 0$.

Observe that the product of the latter three matrices is
\begin{align*}
    &(1+\varepsilon) AZA A^\dagger A A^\dagger A - AZA A^\dagger AZA - (1-\varepsilon^2) A A^\dagger A A^\dagger A A^\dagger A + (1-\varepsilon) A A^\dagger A A^\dagger AZA \\
    &= 2 AZA - AZAZA - (1-\varepsilon^2) A A^\dagger A,
\end{align*}
which is symmetric, hence also positive semi-definite. Thus we have $\varepsilon^2 A^\dagger - (ZAZ - Z A A^\dagger - A^\dagger A Z + A^\dagger) \succeq 0$, which implies that 
\begin{align*}
    &b^T(A^\dagger A A^\dagger-2(1-\epsilon)A^\dagger A A^\dagger+(1+\epsilon)^2A^\dagger A A^\dagger)b = b^T (ZAZ - Z A A^\dagger - A^\dagger A Z + A^\dagger) b \\
    &\leq \varepsilon^2 b^T A^\dagger A A^\dagger b = \epsilon^2 x^TAx.\qedhere
\end{align*}

\end{proof}

\subsection{Flow Problems}\label{sc:prelim_flow}
In this section we formally define the maximum flow problem. Let $G=(V,E)$ be a directed graph, with integral capacities $u\colon E\to \Z_{\geq0}$, and designated source and target vertices $s,t\in V$. We say $f\colon E \to \R_{\geq 0}$ is a $s$-$t$ \emph{flow} if
\begin{enumerate}
	\item for each vertex $v\in V\setminus\{s,t\}$ we have $\sum_{e\in E : v\in E} f_e =0$;\label{item:flow1}
	\item for each edge $e\in E$ we have $f_e \leq u_e$. 
\end{enumerate}
The value of the flow $f$ is defined as $\sum_{u: (s,u)\in E}f_{(s,u)}$. The maximum flow problem is to find a flow of maximum value. 

Additionally, we can have costs on the edges: $c\colon E \to \Z_{\geq 0}$. The cost of the flow $f$ is defined as $\sum_{e\in E} c_ef_e$. The minimum cost (maximum) $s$-$t$ flow problem is to find a flow of minimum cost among all $s$-$t$ flows of maximum value. 

More generally, we can compute flows with respect to a demand vector $\sigma\colon V \to \Z$, satisfying $\sum_{v\in V} \sigma(v) =0$. Then we replace \autoref{item:flow1} by 
\begin{itemize}
    \item[(1')]  for each vertex $v\in V$ we have $\sum_{e\in E : v\in E} f_e =\sigma(v)$.
\end{itemize}
The minimum cost flow problem is then to find the flow of minimum cost among all feasible flows. This generalizes the minimum cost maximum $s$-$t$ flow, since we can binary search over the possible flow values for an $s$-$t$ flow.

\section{Deterministic Sparsifiers and Laplacian Solvers in the Congested Clique}
As outlined in \autoref{sc:laplacian}, we use spectral sparsifiers for our Laplacian solver. 
Chuzhoy, Gao, Li, Nanongkai, Peng, and Saranurak~\cite{CGL+20} show that one can deterministically create spectral sparsifiers from expanders. Using the deterministic expander decomposition construction from Chang and Saranurak~\cite{CS20}, we obtain a deterministic algorithm for spectral sparsifiers, and hence for solving Laplacian systems.

Let us start with the expander decomposition. Hereto, let us formally define conductance and expander decompositions. Expander decompositions were first introduced by Kannan, Vempala and Vetta~\cite{KVV04}, and have been a useful tool in a multitude of applications ever since. 

\begin{definition}
	Let $G=(V,E)$ be a graph. We define the \emph{conductance of $G$} by
\[ \Phi(G) := \min_{\emptyset\neq S \subsetneq V} \frac{|e_G(S,\overline{S})|}{\min\{\Vol(S),\Vol(\overline{S})\}},\]
where $e_G(S,\overline{S}):= \{(u,v)\in E : u\in S,v\in\overline{S}\}$ and $\Vol(S):=\sum_{v\in S}\deg_G(v)$. 
An \emph{$(\epsilon, \phi)$-expander decomposition} of $G$ is a partition $\mathcal{P}=\{V_1,\dots, V_k\}$ of the vertex set $V$, such that for all $i\in\{1,\dots, k\}$ we have $\Phi(G[V_i])\geq \phi$,\footnote{For $U\subseteq V$, we write $G[U]$ for the \emph{induced subgraph} by $U$, i.e., $G[U]:=(U, E\cap (U\times U ))$.} and the number of inter-cluster edges is at most an $\epsilon$ fraction of all edges, i.e., $|\{(u,v)\in E : u\in V_i,v\in V_j, i\neq j\}|\leq \epsilon |E|$. 
\end{definition}
The following theorem for computing expander decompositions holds in the CONGEST model, hence also holds on the congested clique. 
\begin{theorem}[\cite{CS20}]
\label{thm:exp_dec}
Let $0 < \epsilon < 1$ and $\sqrt{\log\log n/\log n}\leq \gamma < 1$ be parameters. An $(\epsilon,\phi)$-expander decomposition of a graph $G=(V,E)$ with $\phi = \epsilon^{O(1)}\log^{-O(1/\gamma)}n$ can be computed in $\epsilon^{-O(1)}n^{O(\gamma)}$ rounds deterministically. 
\end{theorem}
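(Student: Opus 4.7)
The statement is cited verbatim from Chang and Saranurak, so the cleanest plan is to invoke their deterministic CONGEST expander decomposition as a black box; since any CONGEST algorithm runs in the congested clique with at most the same round complexity, this immediately yields the theorem. In what follows I sketch why such a bound is reasonable and indicate where the main difficulty lies.

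The standard skeleton of an expander decomposition is iterated sparse cut: starting from the trivial partition $\{V\}$, while some part $V_i$ fails to be a $\phi$-expander one locates a cut inside $G[V_i]$ of conductance at most $\phi$ and splits $V_i$ along it. A charging argument that pays for each crossing edge by the conductance of the cut responsible for its removal shows that the total number of inter-cluster edges is at most $\epsilon|E|$, provided $\phi$ is chosen proportionally to $\epsilon$ divided by a polylogarithmic overhead. The core algorithmic step is therefore a deterministic CONGEST primitive that, given an induced subgraph, either certifies it as a $\phi$-expander or returns a sufficiently balanced sparse cut. Chang and Saranurak realize this via a deterministic variant of the cut-matching game of Khandekar, Rao, and Vazirani, in which the random matching steps are replaced by pseudorandom matchings generated from seeds of length $O(\gamma \log n)$. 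Exhaustively enumerating the $n^{O(\gamma)}$ seeds in parallel preserves determinism, each individual matching step is executed in $\epsilon^{-O(1)}$ rounds via a CONGEST matching subroutine, and the cut-matching game itself terminates in polylogarithmically many such phases.

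Parts belonging to the same recursion level can be processed in parallel and the recursion depth is $O(\log n)$, both of which fit inside the $\epsilon^{-O(1)} n^{O(\gamma)}$ budget. The conductance guarantee $\phi = \epsilon^{O(1)} \log^{-O(1/\gamma)} n$ reflects the trade-off between the seed length $O(\gamma \log n)$ and the number of cut-matching phases needed to certify either expansion or a sufficiently sparse cut. The main obstacle, which is the central technical contribution of \cite{CS20}, is constructing these pseudorandom matchings so that they still certify expansion with short seeds while remaining efficiently routable within the induced subgraph; I would import their construction unchanged rather than attempt to redesign it, and then simply observe that any round of their CONGEST procedure is at most a round of computation in the strictly more powerful congested clique.
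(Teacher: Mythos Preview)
Your proposal is correct and matches the paper exactly: the paper provides no proof of this statement and simply imports it from \cite{CS20} as a black box, relying on the fact that a CONGEST algorithm is in particular a congested clique algorithm. Your supplementary sketch of the Chang--Saranurak machinery is extra commentary beyond what the paper offers, but the core approach---cite and invoke---is identical.
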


Next, we use this theorem to compute spectral sparsifiers in the congested clique.

\begin{theorem}\label{thm:detSS}
There is a deterministic algorithm that, given an undirected graph $G = (V, E)$ with weights $w\colon E\to\{1,2,\dots, U\}$, and a parameter $1 \leq r \leq  \sqrt{\log n/\log\log n}$, computes a $\log (n)^{O(r^2)}$-approximate spectral sparsifier $H$ for $G$, with $|E(H)| \leq O (n \log n \log U)$, in $O(\log n\log U n^{O(1/r^2)})$ rounds in the congested clique. At the end of the algorithm $H$ is known to every node. 
\end{theorem}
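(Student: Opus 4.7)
The plan is to implement the recursive deterministic spectral sparsifier construction of Chuzhoy et al.~\cite{CGL+20} in the congested clique, using the deterministic expander decomposition of Theorem~\ref{thm:exp_dec} as the main building block. I would first reduce to the unweighted case by partitioning $E$ into $O(\log U)$ weight classes, with the $i$-th class containing edges whose weight lies in $[2^i, 2^{i+1})$. Handling each class as a nearly unweighted graph and taking the union of the resulting sparsifiers yields a sparsifier for $G$ whose approximation factor is preserved up to a constant and whose edge count is the sum across classes. Thus it suffices to treat a single, essentially unweighted class and pay an $O(\log U)$ factor in both rounds and edges.

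For a single class $G'$, I would recurse as follows. Apply Theorem~\ref{thm:exp_dec} with $\gamma = \Theta(1/r^2)$ and $\epsilon = 1/2$ to obtain a $(1/2, \phi)$-expander decomposition with $\phi = \log^{-O(r^2)} n$, in $n^{O(1/r^2)}$ rounds. For each cluster $V_i$, the induced subgraph $G'[V_i]$ is a $\phi$-expander, and invoking the deterministic expander-to-sparsifier primitive of~\cite{CGL+20} produces a sparse subgraph of $G'[V_i]$ that is a $\log^{O(r^2)} n$-approximate spectral sparsifier. I then recurse on the at-most-half inter-cluster edges and stop after $O(\log n)$ levels, at which point no edges remain; the output for $G'$ is the union over all levels of the per-cluster sparsifiers.

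Summing the costs gives $O(\log n) \cdot n^{O(1/r^2)}$ rounds per weight class, hence $O(\log n \log U \cdot n^{O(1/r^2)})$ rounds in total. Regarding approximation, each per-cluster sparsifier is a $\log^{O(r^2)} n$-approximation in the PSD order; the clusters at a given level partition the edges handled at that level, and the levels partition the entire edge set of the class, so summing in the PSD order propagates the same approximation factor to the union, matching the required $\log(n)^{O(r^2)}$ bound. The edge count is $O(n)$ per level, hence $O(n \log n)$ per weight class and $O(n \log n \log U)$ in total. Finally, since the sparsifier has only $\tilde O(n)$ edges, I can aggregate and broadcast it to every node using Lenzen's routing~\cite{Lenzen13} in $O(\log n \log U)$ additional rounds, which is absorbed in the overall round bound.

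The hard part will be implementing the per-cluster expander-to-sparsifier primitive of~\cite{CGL+20} in the congested clique, since that construction was developed for the sequential setting and its internal steps (bucketing edges by endpoints, computing short cycle covers, and invoking auxiliary expander decompositions) must be performed in parallel across many clusters of varying sizes. I would verify that each such subroutine can be run simultaneously across all clusters by exploiting the clique's all-to-all communication and Lenzen's routing; because the total edge count across clusters at any level is bounded by $|E(G')|$, the aggregate per-level work fits within the $n^{O(1/r^2)}$ budget afforded by Theorem~\ref{thm:exp_dec}, preserving the claimed round complexity.
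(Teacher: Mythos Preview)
Your high-level plan---binary weight classes, repeated $(1/2,\phi)$-expander decomposition with $\gamma=\Theta(1/r^2)$, recursion on inter-cluster edges for $O(\log n)$ levels, union of per-cluster sparsifiers---matches the paper exactly, and your accounting of rounds, edges, and approximation factor is correct.

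The point where you diverge is the per-cluster step, and here you are making life much harder than necessary. You describe the ``expander-to-sparsifier primitive of~\cite{CGL+20}'' as involving bucketing, short cycle covers, and auxiliary expander decompositions, and you flag porting all of this to the congested clique as the hard part. But the specific fact from~\cite{CGL+20} that the paper invokes is far simpler: for any graph $G'$ with conductance at least $\phi$, the \emph{product demand graph} $D=\tfrac{2}{|E(G')|}H(\deg_{G'})$---the complete graph on $V(G')$ with edge weight proportional to $\deg_{G'}(u)\cdot\deg_{G'}(v)$---is already a $4/\phi^2$-approximate spectral sparsifier of $G'$. This graph depends only on the degree sequence of $G'$, so in the congested clique it is obtained in a single round: every vertex broadcasts its degree (and cluster ID), after which every node knows $D$ in full and can sparsify it to $O(|V(G')|)$ edges internally using, e.g., the deterministic algorithm of~\cite{KLP+16}. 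Since clusters at a fixed level are vertex-disjoint, this works for all clusters simultaneously.

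So your identified ``hard part'' evaporates once you use the right primitive; there is no need to implement cycle covers or nested decompositions inside each expander. A minor side effect is that the resulting sparsifier is not literally a subgraph of $G$ (the product demand graph introduces new weighted edges), but this is immaterial for the Laplacian-solving application, which only needs the two-sided spectral inequality.
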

\begin{proof} 
We will show we can implement the algorithm of Chuzhoy, Gao, Li, Nanongkai, Peng, and Saranurak in the congested clique. For a complete proof of correctness, we refer to~\cite{CGL+20}. 

First, we assume $G$ is unweighted. We start by decomposing $G$ into expanders as follows. Let $\mathcal{P}_0$ be an $(1/2,\phi)$-expander decomposition of $G_0:=G$, with $\phi=1/\log(m)^{O(r^2)}$. This can be computed using \autoref{thm:exp_dec} in $O(\log m\epsilon^{-O(1)}n^{O(1/r^2)})$ rounds, by setting $\epsilon=1/2$ and $\gamma = O(1/r^2)$. Set $G_2\subseteq G$ to be the graph consisting of the edges crossing the expander decomposition $\mathcal{P}_1$. Now we repeat this procedure: find a $(1/2,\phi)$-expander decomposition $\mathcal{P}_i$ of $G_i$, and set $G_{i+1}\subseteq G$ to be the edges crossing the expander decomposition. As we have set $\epsilon=1/2$, we need to do this at most $O(\log m)=O(\log n )$ times. Hence we obtain $O(\log n)$ times a set of disjoint graphs of conductance at least $\phi$, in a total of $O(\log n n^{O(1/r^2)})$ rounds.

Next, we sparsify each of these graphs separately, and define the sparsifier of $G$ to be the union of all the separate sparsifiers. By linearity, we get the desired result. 

A graph $G'$ with conductance at least $\phi$, can be approximated by the \emph{product demand graph} $H(\deg_{G'})$. The product demand graph $H(\deg_{G'})$ is defined as a complete graph on $V(G')$, with weights $w(u,v)=\deg_{G'}(u)\cdot \deg_{G'}(v)$. We then have that $D:=\frac{2}{|E(G')|}H(\deg_{G'})$ is a $\frac{4}{\phi^2}$-approximate spectral sparsifier for $G'$~\cite{CGL+20}.

In the congested clique, constructing $D=\frac{2}{|E(G')|}H(\deg_{G'})$ just takes one round where each vector broadcasts its ID. This can be done for all expanders of $\mathcal{P}_i$ simultaneously, so this takes total time $O(\log n)$. 

Next, we sparsify $D$, which can be done internally at each node, see Kyng, Lee, Peng, Sachdeva, and Spielman~\cite{KLP+16} for an algorithm. To be precise, we compute a graph $D'$ that is a $2$-approximate spectral sparsifier of size $O(n)$. We conclude that, $D'$ is a $\frac{4}{\phi^2}$-approximate spectral sparsifier for $G'$. 

As the graphs in our decomposition have conductance at least $1/\log(m)^{O(r^2)}$, we obtain a $4\left(\log(m)^{O(r^2)}\right)^2=\log(m)^{O(r^2)}$-approximate spectral sparsifier. As each level of recursion gives a graph of size $O(n)$, we have total size $O(n\log n)$. Finally, to obtain the weighted result, it is sufficient to run the algorithm for binary weight classes, giving an extra factor of $\log U$ in the size and time. For details see Chuzhoy Gao, Li, Nanongkai, Peng, and Saranurak\cite{CGL+20}. 
\end{proof}

Next, we show how to use these approximate spectral sparsifiers to solve Laplacian systems. This is a standard argument that we include here for the sake of completeness. 
\ThmLaplacian*
\begin{proof}
First, we compute an $e^{O\left( (\log n \log\log n)^{1/2}\right)}$-approximate spectral sparsifier $H$ of $G$. This takes $O\left(\log(U/\epsilon)e^{O\left( (\log n \log\log n)^{1/2}\right)}\right)$~rounds, using \autoref{thm:detSS} with $r=\left(\frac{\log n}{\log\log n}\right)^{1/4}$. Note that we set the bound on the weights to $U/\epsilon$ instead of $U$, since \autoref{thm:detSS} takes integer weights. Hence we preprocess the graph (internally at each node) by rounding to the closest multiple of $\epsilon$, scaling by $1/\epsilon$, and after computing the sparsifier rescaling it by $\epsilon$. 

Next, we apply \autoref{cor:laplaciansolving}. This means we need $O(\sqrt{\alpha}\log(1/\epsilon))=e^{O\left( (\log n \log\log n)^{1/2}\right)}\log(1/\epsilon)$ iterations, consisting of a matrix-vector multiplication by $L_G$, a solve involving $L_H$, and a constant number of vector operations. Each such iteration can be performed in a constant number of rounds, which is clear for the matrix-vector multiplication and the vector operations. For the solve involving $L_H$, note that the sparsifier $H$ is known to each node, therefore any solve involving the sparsifier can be done internally at each node as well, so takes only one round. Finally we note that 
$$O(e^{O\left( (\log n \log\log n)^{1/2}\right)}(\log U+\log(1/\epsilon)))=O(n^{o(1)}(\log U+\log(1/\epsilon))),$$
which gives the result as stated
\end{proof}

\section{Flow Rounding}\label{sc:flow_rounding}
The interior point method used in this paper for computing maximum flow is a high-accuracy approximate solver. This means that the final answer is not yet exact, albeit very close. If we have integer edge capacities, we know that there exists an exact integer solution to the maximum flow problem. This  solution can be obtained from the approximate solution, which might be fractional. This is done by rounding the fractional solution to an integer solution, and computing the final solution from the approximate solution using a single augmenting path. In this section, we present a fast deterministic flow rounding algorithm for the congested clique.

As is standard practice, we use the flow rounding algorithm of Cohen~\cite{Cohen95}. This algorithm uses the decomposition of an unweighted undirected Eulerian graph into directed cycles as a subroutine. Although the original result is in a sequential setting, we can phrase the result more generally as follows and provide the corresponding algorithm in \autoref{app:flow_rounding}. 

\begin{theorem}[Proposition 5.3 of \cite{Cohen95}]\label{thm:cohen_rounding}
Let $G=(V,E)$ be a graph on $n$ vertices and $m$ edges. Let $f\colon E\to \R_{\geq 0}$ be an $s$-$t$ flow function, and $\Delta$ be a real value such that $1/\Delta$ is a power of $2$ and $f(e)$ is an integer multiple of $\Delta$ for every $e\in E$. Suppose there is an algorithm that computes Eulerian orientations in $T$ time. Then there exists an algorithm that rounds $f$ on edge $e\in E$ to $\lfloor f(e) \rfloor$ or $\lceil f(e) \rceil$ such that the resulting flow has total flow value not less than $f$. If the total flow $f$ is integral, and there is an integral cost function $c\colon E \to \Z_{\geq 0}$, then there exists an algorithm that rounds $f$ on edge $e\in E$ to $\lfloor f(e) \rfloor$ or $\lceil f(e) \rceil$ such that the resulting flow has total flow value not less than $f$, and the resulting total cost is no more than that of $f$.

Both algorithms run in time $O(T\log(1/\Delta))$. 
\end{theorem}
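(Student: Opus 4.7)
The plan is to implement Cohen's bit-by-bit halving, substituting the Eulerian-orientation oracle promised in the hypothesis for the sequential subroutine of the original proof. Writing $\Delta = 2^{-K}$, the algorithm would run in $K = \log(1/\Delta)$ phases, maintaining the invariant that after phase $j$ every $f(e)$ is an integer multiple of $\Delta_j := 2^j \Delta$. So that the source and sink can be treated like all other vertices, I would first augment $G$ with a virtual edge $e^{*} = (t, s)$ carrying flow $|f|$, turning $f$ into a circulation $\tilde f$ on $G \cup \{e^{*}\}$.

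At the start of phase $j+1$, let $H_j \subseteq G \cup \{e^{*}\}$ consist of the edges with $\tilde f(e)/\Delta_j$ odd. Flow conservation modulo $2\Delta_j$ at every vertex forces the degree of every vertex in $H_j$ to be even, so $H_j$ is Eulerian in the sense of \autoref{thm:EulerianOrientation}. A single oracle call produces an orientation $\vec H_j$ in $T$ rounds; for each $e = (u,v) \in H_j$, I would add $\Delta_j$ to $\tilde f(e)$ if $\vec H_j$ orients $e$ as $u \to v$, and subtract $\Delta_j$ otherwise. Because $\vec H_j$ has in-degree equal to out-degree at every vertex, these $\pm \Delta_j$ updates cancel pairwise at each vertex, so $\tilde f$ remains a circulation; and the update is exactly what promotes $\tilde f(e)$ to a multiple of $\Delta_{j+1}$. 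Since $\sum_{j=0}^{K-1} \Delta_j = 1 - \Delta < 1$, each edge's cumulative update is less than $1$, so the final value is either $\lfloor f(e) \rfloor$ or $\lceil f(e) \rceil$; non-negativity holds because an edge odd at level $j$ carries flow at least $\Delta_j$ before any subtraction.

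The two additional conclusions fall out from the observation that the reverse orientation $\overline{\vec H_j}$ is also a valid Eulerian orientation of $H_j$, giving a free binary choice per phase. For the no-cost statement I would spend this choice on the virtual edge: whenever $e^{*} \in H_j$, pick the orientation in which $e^{*}$ is oriented along its own direction $t \to s$, which adds $\Delta_j$ to $\tilde f(e^{*})$ and hence to $|f|$. Summing over phases leaves $|f|$ monotonically non-decreasing. For the cost statement, integrality of $|f|$ makes $\tilde f(e^{*})/\Delta_j = |f|\cdot 2^{K-j}$ even for every $j \leq K - 1$, so $e^{*}$ never lies in $H_j$ and $|f|$ is preserved automatically; the binary choice can then be spent on cost, picking the orientation whose global cost change $X_j$ satisfies $X_j \leq 0$ (the reverse produces $-X_j$, so $\min(X_j, -X_j) \leq 0$). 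This guarantees a phase-wise non-increasing total cost.

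Each phase costs one Eulerian-orientation oracle call of $T$ rounds, plus a $O(1)$-round broadcast of the single bit (original orientation vs.\ reverse) followed by local updates at each incident node, giving the claimed total of $O(T \log(1/\Delta))$. The main technical point is the non-interaction of the two binary choices: the flow-preservation constraint and the cost-preservation constraint could a priori conflict, but the observation that $e^{*} \notin H_j$ in precisely the regime where the cost conclusion is required (integral $|f|$) means that the single degree of freedom per phase is always available for whichever conclusion one is aiming at.
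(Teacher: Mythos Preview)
Your argument is correct and follows the same bit-by-bit halving scheme as Cohen's algorithm, which is exactly what the paper presents (it gives the algorithm in \autoref{app:flow_rounding} and defers correctness to~\cite{Cohen95}).

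There is one genuine tactical difference worth flagging. The paper's version decomposes $H_j$ into an \emph{Eulerian partition} (edge-disjoint cycles) and then makes an independent orientation choice \emph{per cycle}: the cycle containing $(t,s)$ is oriented so that $(t,s)$ is forward, while every other cycle is oriented to make its own forward-cost not exceed its backward-cost. You instead take a single Eulerian orientation of all of $H_j$ and exercise only \emph{one} global binary flip per phase. This is a legitimate simplification---it matches the hypothesis more literally, since the oracle promised is an Eulerian-\emph{orientation} oracle, not a cycle-decomposition oracle---but it only works because you observe that the two demands never collide: in the cost case, integrality of $|f|$ forces $\tilde f(e^{*})/\Delta_j$ to be even at every level, so $e^{*}$ is absent from $H_j$ and the lone degree of freedom can be spent entirely on cost. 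The paper's per-cycle approach does not need this observation, since it could in principle satisfy the $(t,s)$ constraint and the cost constraint on disjoint cycles simultaneously; your approach trades that flexibility for a cleaner reduction. Both yield the stated $O(T\log(1/\Delta))$ bound.
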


In the congested clique, we have the following result for computing Eulerian orientations. 

\ThmEulerianOrientation*
\begin{proof}
    The algorithm consists of the following steps. 
\begin{enumerate}
	\item Since the graph is Eulerian, each node has an even number of neighbors. Internally, each node pairs its neighbors to create cycles. This means we have obtained a decomposition into cycles that of which each node only has local knowledge -- which is obvious, as there has been no communication yet. For each of the cycles a node is part of, it keeps track of a status (active/inactive), initially set to active. \label{step:setup}
	\item For $i=1,\dots \log n$ do: \label{step:iteration}
	\begin{enumerate} 
		\item For each cycle, mark at most half of the vertices, such that there never more than 3 unmarked consecutive vertices, for a procedure see below.   \label{step:sampling}
		\item Each active node begins to send its ID along the cycle into both directions. We do this for $4$ rounds, after which the message has arrived at the next active node. (In the case of a `short' cycle, the message will arrive at the initial node itself.)\\
	Note that, for iterations with $i\geq 2$, the links on which the messages are sent are no longer necessarily disjoint. This is where we use the power of the congested clique model: in a constant number of rounds, these messages can still be delivered to the recipients by using a routing algorithm. Each node is part of at most $n$ cycles, so in any point in time receives and sends at most $n$ messages, which gives a total of $O(n^2)$ messages. Hence we can use the routing algorithm of Lenzen~\cite{Lenzen13} to deliver all messages in at most 16 rounds.   \label{step:find_neighbors} 
		\item Now on each cycle, each active node knows its `neighboring' active nodes. We consider the created cycles of active nodes for the next iteration. \label{step:next_iteration}
	\end{enumerate}
	\item After the $\log n$ iterations, each cycle has at most $O(n\cdot (1/2)^{\log n})=O(1)$ active nodes remaining. In a constant number of rounds, these nodes pick a leader by simply traversing the cycle and finding the node with the highest ID. Each leader arbitrarily picks a direction for its cycle.  \label{step:leaderpicking}
	\item We reverse steps \ref{step:leaderpicking} and \ref{step:iteration} to communicate the chosen orientation to the whole cycle. \label{step:reverse}
\end{enumerate}

Now let us look at how to implement step~\ref{step:sampling}. 
First, we find a maximal matching in $O(\log^* n)$ rounds. This is done by finding a 3-coloring of the cycle in $O(\log^* n)$ rounds \cite{CV86,GPS87}. From this we easily find a maximal matching in a constant number of rounds. 

Next, for each edge of the matching, we mark the vertex with the highest ID. Two edges from the matching can have at most 1 unmatched vertex in between by maximality of the matching, so in total there are at most 3 consecutive unmarked vertices. Moreover, as the marked vertices are half the vertices of the matching, there are at most $n/2$. 

By using the efficient routing techniques of Lenzen~\cite{Lenzen13}, we can simulate this in the congested clique in $O(\log^* n)$ rounds for all cycles simultaneously, even if the cycles in question do not consist of edges of the input graph. 

We see that step~\ref{step:setup} takes no rounds, as this is an internal action. As just shown, step~\ref{step:sampling} takes $O(\log^* n)$ rounds. Step~\ref{step:next_iteration} is again done internally, and step~\ref{step:find_neighbors} is done in a constant number of rounds. In total, this means step~\ref{step:iteration} takes $O(\log n\log^*n)$ rounds. Step~\ref{step:leaderpicking} takes a constant number of rounds, and step~\ref{step:reverse} again takes $O(\log n)$ rounds. So in total the algorithm takes $O(\log n\log^* n)$ rounds. 
\end{proof}

Note that for a randomized algorithm we could replace step~\ref{step:sampling} in \autoref{thm:EulerianOrientation} by randomly sampling each node with constant probability to remove the $\log^* n$-factor in the running time. 

Combining \autoref{thm:cohen_rounding} with \autoref{thm:EulerianOrientation} gives the following result.

\begin{lemma}\label{lm:flowrounding_detCC}
Let $G=(V,E)$ be a graph on $n$ vertices and $m$ edges. Let $f\colon E\to \R_{\geq 0}$ be an $s$-$t$ flow function, and $\Delta$ be a real value such that $1/\Delta$ is a power of $2$ and $f(e)$ is an integer multiple of $\Delta$ for every $e\in E$. There exists an algorithm that rounds $f$ on edge $e\in E$ to $\lfloor f(e) \rfloor$ or $\lceil f(e) \rceil$ such that the resulting flow has total flow value not less than $f$. If the total flow $f$ is integral, and there is an integral cost function $c\colon E \to \Z_{\geq 0}$, then there exists an algorithm that rounds $f$ on edge $e\in E$ to $\lfloor f(e) \rfloor$ or $\lceil f(e) \rceil$ such that the resulting flow has total flow value not less than $f$, and the resulting total cost is no more than that of $f$. Both algorithm run in $O(\log n\log^* n\log(1/\Delta))$ rounds deterministically in the congested clique. 
\end{lemma}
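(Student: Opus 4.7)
The plan is to obtain this lemma as an immediate corollary by instantiating \autoref{thm:cohen_rounding} with the Eulerian orientation subroutine provided by \autoref{thm:EulerianOrientation}. The statement of \autoref{thm:cohen_rounding} is formulated in a model-independent way: it reduces flow rounding (both in the flow-value-preserving and in the cost-preserving variant) to $O(\log(1/\Delta))$ invocations of an Eulerian orientation algorithm on an auxiliary unweighted undirected Eulerian graph, each invocation running in some time $T$. So the only task is to verify that this black-box reduction can be executed in the congested clique and that each Eulerian orientation call can be handled by \autoref{thm:EulerianOrientation}.

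First, I would describe how Cohen's reduction (whose pseudocode is deferred to \autoref{app:flow_rounding}) proceeds: after scaling so that all flow values are integer multiples of some power of two, the fractional bits are processed one at a time; at each bit level, one forms an unweighted undirected multigraph whose edges correspond to edges carrying a half-integral amount at the current precision, argues that this graph is Eulerian, computes an Eulerian orientation, and uses the orientation to round each half-integral value up or down in a way that preserves the flow conservation constraints (and, in the cost variant, does not increase the cost, by orienting each cycle in the cheaper direction). This gives $O(\log(1/\Delta))$ rounds of rounding, each requiring one Eulerian orientation call.

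Next, I would check that each of these $O(\log(1/\Delta))$ calls can be served by \autoref{thm:EulerianOrientation} in the congested clique. The auxiliary Eulerian multigraph at each level is defined on the same vertex set $V$, and each of its edges corresponds to an edge of $G$, so each node can locally identify its incident edges in the auxiliary graph from the current flow values, which it stores locally. Hence \autoref{thm:EulerianOrientation} applies without any model-translation issues and yields the orientation in $T = O(\log n \log^* n)$ rounds; the subsequent rounding update on each edge is a purely local decision at its endpoints and requires no additional communication beyond a single exchange between the two endpoints.

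Putting these together, the total round complexity is $O(T \log(1/\Delta)) = O(\log n \log^* n \log(1/\Delta))$ as claimed, and determinism is preserved because both \autoref{thm:cohen_rounding} and \autoref{thm:EulerianOrientation} are deterministic. I do not expect a real obstacle here; the only mild subtlety is to make sure that the construction of each auxiliary Eulerian (multi)graph, and the translation of its orientation back into a rounding decision on the original edges of $G$, can be carried out without communication beyond the endpoints of each edge, which follows directly from the local structure of Cohen's procedure.
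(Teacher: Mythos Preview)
Your approach is the same as the paper's: plug \autoref{thm:EulerianOrientation} into \autoref{thm:cohen_rounding}. The round count and the determinism claim follow exactly as you say.

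There is, however, one point you glide past that the paper treats explicitly. In the cost variant, Cohen's algorithm (see \autoref{line:rounding_costs} in \autoref{algo:rounding}) does not merely ask for \emph{a} Eulerian orientation: for each cycle in the Eulerian partition it must pick the traversal direction in which the sum of (signed) costs is smaller. The black box of \autoref{thm:EulerianOrientation} returns an arbitrary orientation and gives the leader of a cycle no information about total cost, so as stated it does not suffice. The paper closes this gap by observing that the contraction procedure inside \autoref{thm:EulerianOrientation} can carry cost information: whenever inactive nodes between two active nodes are replaced by a virtual edge, that edge is labeled with the signed sum of the costs along the path it represents. After $O(\log n)$ levels the leader holds the net cost of traversing the whole cycle in each direction and can choose the cheaper one; this aggregation is just one extra $O(\log n)$-bit number per message and does not change the round complexity. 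You should add this remark, since otherwise the cost-preserving claim does not follow from the two theorems as black boxes.
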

\begin{proof}
    We provide the algorithm \textsc{FlowRounding} solving both cases in \autoref{app:flow_rounding}, which uses the algorithm of \autoref{thm:EulerianOrientation} as a subroutine. For correctness, we refer to Cohen~\cite{Cohen95}. 

    Assuming we can traverse cycles such that the sum of costs on forward edges is no more than the sum of costs on backward edges, \autoref{line:rounding_costs}, the number of rounds follows directly: there are $\log(1/\Delta)$ iterations, where each iterations consists of finding Eulerian orientations and some internal operations. Using that Eulerian orientations can be computed in $O(\log n\log^* n)$ rounds by \autoref{thm:EulerianOrientation} we obtain the result. 

    It remains to show that we can traverse cycles such that the sum of costs on forward edges is no more than the sum of costs on backward edges. This is a small adaptation from \autoref{thm:EulerianOrientation}. Whenever a set of active nodes is selected, we set the cost of the edges between the active nodes to be equal to the sum of the edges in between, while adding a minus sign for backward edges. This means that in the end the leader knows the cost of traversing the cycle in either direction and can choose the direction such that the sum of costs on forward edges is no more than the sum of costs on backward edges.
\end{proof}

\section{Maximum Flow}\label{sc:flow}
In this section, we use our deterministic Laplacian solver and deterministic flow rounding algorithms to solve the maximum flow problem in the congested clique. Hereto, we show that our algorithms allow for an efficient implementation of M{\k{a}}dry's interior point method for computing maximum flows~\cite{Madry16}, hence for correctness of the algorithm we refer to~\cite{Madry16}.

\ThmFlow*
\begin{proof}
We provide the algorithm in \autoref{sc:app_maxflow}. A high-level overview of this algorithm is as follows. We run $\tilde O(m^{3/7}U^{1/7})$ iterations of augmenting flows, where $U$ is the maximum weight. Each iteration consists of solving an electrical flow instance, which corresponds to solving a Laplacian system. The resulting flow after these iterations is fractional, and almost optimal: its value is at most one less then the optimal flow. Rounding this fractional flow and then finding an augmenting path gives us the exact maximum flow. 

Now, let us consider these steps in the congested clique. The algorithm uses the subroutines \textsc{Augmentation}, \textsc{Fixing}, \textsc{Boosting}, and \textsc{FlowRounding}.  \textsc{Augmentation}, \autoref{alg:augmentation}, consists of some operations that can be done in a constant number of rounds and a Laplacian solve (\autoref{line:aug_laplacian}). The latter can be performed in $n^{o(1)}$ rounds by \autoref{thm:Laplacian_detCC}, since we need precision $\Omega(1/\poly(m))$~\cite{Madry16}. Similarly, \textsc{Fixing}, \autoref{alg:fixing}, consists of operations that be done in a constant number of rounds and a Laplacian solve (\autoref{line:fix_laplacian}). \textsc{Boosting}, \autoref{alg:boosting}, consists of only updating variables and can be done in a constant number of rounds. 

Now \textsc{Max Flow}, \autoref{alg:madry}, starts with preconditioning and initialization. Both can be done in a constant number of rounds. The progress step starts with a single call to \textsc{Augmentation} and \textsc{Fixing}, done in $n^{o(1)}$ rounds. Then the for-loop of \autoref{line:flow_forloop} consists of $100\cdot\frac{1}{\widehat{\delta}}\cdot\log{U}=\tilde O(m^{3/7}U^{1/7})$ iterations. Each iteration consists of either a call to \textsc{Augmentation} and \textsc{Fixing}, or a call to \textsc{Boosting}. So in total the for-loop takes $m^{3/7+o(1)}U^{1/7}$ rounds. 

Next, on \autoref{line:flow_round}, we round the flow. We know it is enough to maintain flows with precision $O(1/m)$, so we use \autoref{lm:flowrounding_detCC} with $\Delta = O(1/m)$, and hence rounding takes $O(\log^* n\log^2 n)$ rounds.

Finally, we need to find augmenting paths. The while-loop of \autoref{line:flow_augmenting} actually only needs one iteration~\cite{Madry16}, and this augmenting path can be found by solving (approximate) directed SSSP on the residual graph. We do this in $O(n^{0.158})$ rounds using the algorithm of Censor-Hillel, Kaski, Korhonen, Lenzen, Paz, and Suomela~\cite{Censor-HillelKK19} for a $(1+o(1))$-approximation of weighted directed APSP.

We obtain a total number of rounds of $m^{3/7+o(1)}U^{1/7}$ to solve the exact maximum flow problem in the congested clique. 
\end{proof}

\section{Unit Capacity Minimum Cost Flow}\label{sc:mincostflow}
Similar to \autoref{sc:flow}, we use our Laplacian solver and flow rounding algorithm as subroutines to solve the minimum cost flow problem. We show that these tools allow for an efficient implementation of the algorithm of Cohen, M\k{a}dry, Sankowski, and Vladu~\cite{CMSV17} for unit capacity minimum cost flow. For correctness, we refer to~\cite{CMSV17}, since we give a congested clique implementation of their algoruithm.

\ThmMinCostFlow*
\begin{proof}
    We provide the algorithm \textsc{MinCostFlow}, \autoref{algo:min-cost-flow-final}, in \autoref{sc:app_mincostflow}. It uses four subroutines: \textsc{Initialization}, \textsc{Perturbation}, \textsc{Progress}, and \textsc{Repairing}. We analyze their running times in the congested clique first. 
    \textsc{Initialization}, \autoref{algo:initializationbipartitegraph_app}, and
    \textsc{Perturbation}, \autoref{perturbation}, both consist of some operations that can be done in a constant number of rounds. 
    \textsc{Progress}, \autoref{progress}, consists of some operations that can be done in a constant number of rounds and two Laplacian solves. The latter can be done in $n^{o(1)}$ rounds by \autoref{thm:Laplacian_detCC}, since we need precision $\Omega(1/\poly(m))$~\cite{CMSV17}. 
    \textsc{Repairing}, \autoref{repairing}, first consists of some operations that can be done in a constant number of rounds and a call to \textsc{FlowRounding}. The latter takes $O(\log^2 n\log^* n)$ rounds by applying \autoref{lm:flowrounding_detCC} with $\Delta=O(1/m)$, which is sufficient by~\cite{CMSV17}. Then the algorithm proceeds by a for-loop, \autoref{line:repairing_forloop}, which consists of $\tilde O(m^{3/7})$ iterations. In each iteration, there are some operations that can be performed in a constant number of rounds, a shortest path computation, \autoref{line:repairing_SP}, and another for-loop, \autoref{line:repairing_for2}. For the shortest path computation, approximations suffice~\cite{CMSV17}, hence it can be done in $O(n^{0.158})$ rounds using the algorithm of Censor-Hillel, Kaski, Korhonen, Lenzen, Paz, and Suomela~\cite{Censor-HillelKK19} for a $(1+o(1))$-approximation of weighted directed APSP. The for-loop can be executed internally for all relevant nodes, so takes no rounds. In total this gives $\tilde O(m^{3/7}n^{0.158})$ rounds. 

    Next we analyze \textsc{MinCostFlow}, \autoref{algo:min-cost-flow-final}. The first two lines are the initialization, which takes a constant number of rounds. Then there is a for-loop, \autoref{line:mincostflow_for1}, with $c_Tm^{1/2-3\eta}=O(\log^{4/3} W m^{2/7})$ iterations. In each iteration, there are some variable adjustments that can be done in parallel in a constant number of rounds (the for-loop of \autoref{line:mincostflow_for2}), and another for-loop, \autoref{line:mincostflow_for3}. The latter for-loop consists of $m^{1/7}$ iterations, where each iteration consists of a while-loop, \autoref{line:mincostflow_while}, and a call to \textsc{Progress}. The latter takes $n^{o(1)}$ rounds by the above. The while-loop gets satisfied at most $O(m^{3/7}\poly\log W)$ times~\cite{CMSV17} \emph{in total}. Since each iteration of the while-loop consists of a call to \textsc{Perturbation}, this takes $O(m^{3/7}\poly\log W)$ rounds. 
    Finally, we call \textsc{Repairing}, which takes $\tilde O(m^{3/7}n^{0.158})$ rounds. Summing all running times, we obtain $\tilde O(m^{3/7}(n^{0.158}+n^{o(1)}\poly\log W))$ rounds. 
\end{proof}

\subsection*{Acknowledgements}
We thank the anonymous reviewer who pointed out a mistake in the proof of \autoref{cor:laplaciansolving} and showed how to fix it. 

\printbibliography[heading=bibintoc] 

\appendix

\newpage
\section{Flow Rounding Algorithm}\label{app:flow_rounding}
In this section, we give the flow rounding algorithm, as designed by Cohen~\cite{Cohen95}. We use the notation of Forster, Goranci, Liu, Peng, Sun, and Ye~\cite{FGLP+20}, since this suits applying it in a distributed setting. We run the algorithm in the congested clique, and analyze the corresponding running time in \autoref{sc:flow_rounding}.

\begin{algorithm}[ht]
\caption{$\textsc{FlowRounding}(G, s, t, f, c, \Delta)$\label{algo:rounding}}
\KwIn{graph $G=(V,E)$; source $s\in V$ and sink $t\in V$; flow function $f\colon E \to \R_{\geq 0}$; value $\Delta\in \R_{\geq 0}$ such that $1/\Delta$ is a power of $2$ and $f(e)$ is an integral multiple of $\Delta$ for each  $e\in E$; (optional) cost function $c\colon E \to \Z_{\geq 0}$;}
	\If {the total flow of $f$ is not integral} {
		Add an edge from $t$ to $s$ with flow value the same as total flow.\\
	}
	\While{$\Delta < 1$} {\label{ln:FlowRoundingWhile}
		$ E' \leftarrow \{(u, v) \in  E: f(u, v)/ \Delta \text{ is } odd\}$\\
		Find an Eulerian partition of $ E'$ (ignoring the directions of the edges)\\
		\For{every cycle of the Eulerian partition of $ E'$} {

			\If {cycle contains the edge $(t, s)$} {
			Traverse the cycle such that edge $(t, s)$ is a forward edge.
			}
			\ElseIf {cost function $c$ exists} {
				Traverse the cycle such that the sum of costs on forward edges is no more than the sum of costs on backward edges.\label{line:rounding_costs}
			}
			\Else{
				Traverse the cycle arbitrarily.
			}
		}
		\For{every edge $(u, v)$ $ E'$} {
			\If {$(u, v)$ is a forward edge w.r.t.\ the traversal of the path containing $(u, v)$} {
				$f(u, v) \leftarrow f(u, v) + \Delta$\\
			}
			\Else {$f(u, v) \leftarrow f(u, v) - \Delta$\\}

		}
		$\Delta \leftarrow 2\Delta$\\
	}
	\Return $f$.

\end{algorithm}

\newpage
\section{Maximum Flow Algorithm}\label{sc:app_maxflow}
The algorithms presented in this section are designed by M\k{a}dry~\cite{Madry16}, who proves their correctness. We give the algorithms as they appeared in~\cite{FGLP+20}, since there they are already phrased for a distributed setting -- in their case, this was the CONGEST model. We run the algorithm in the congested clique, and analyze the corresponding running time in \autoref{sc:flow}. 

\begin{algorithm}[H]
\KwIn{directed graph $G_0=(V, E_0,\vec{u})$ with each $e\in E_0$ having two non-negative integer capacities $u_{e}^{-}$ and $u_{e}^{+}$; $|V|={n}$ and $|E_0|={m}$; source $s$ and sink $t$; the largest integer capacity $U$; target flow value $F\ge 0$;}
\caption{\textsc{MaxFlow}($G_0$, $s$, $t$, $U$, $F$)}\label{alg:madry}
\tcc{Preconditioning Edges}
Add $m$ undirected edges $(t, s)$ with forward and backward capacities $2U$ to $G_0$\;
\tcc{Initialization}
\For{each $e=(u,v)\in E_0$}{
replace $e$ by three undirected edges $(u,v)$, $(s,v)$ and $(u,t)$ whose capacities are $u_{e}$\;}
Let the new graph be $G=(V,E)$\;
Initialize $\vec{f}\leftarrow\vec{0}$ and $\vec{y}\leftarrow\vec{0}$\;
\tcc{Progress Step}
$\vec{\widetilde{f}}, \vec{\widehat{f}}, \vec{\widehat{y}}\leftarrow\textsc{Augmentation}(G, s, t, F)$\;
Compute $\vec \rho$ by letting $\rho_{e}\leftarrow\frac{\widetilde{f}_{e}}{\min\{u_{e}^{+}-f_{e}, u_{e}^{-}+f_{e}\}}$\;
$\vec{f}, \vec{y}\leftarrow\textsc{Fixing}\left(G, \vec{\widehat{f}}, \vec{\widehat{y}}\right)$\;
$\eta\leftarrow\frac{1}{14}-\frac{1}{7}\log_{m}{U}-O(\log_m\log(mU))$, 
$\widehat{\delta}\leftarrow\frac{1}{m^{\frac{1}{2}-\eta}}$\;
\For{$t=1$ \KwTo $100\cdot\frac{1}{\widehat{\delta}}\cdot\log{U}$\label{line:flow_forloop}}{
\eIf{$\left\|\vec{\rho}\right\|_{3}\le\frac{m^{\frac{1}{2}-\eta}}{33(1-\alpha)}$}
{$\delta\leftarrow\frac{1}{33(1-\alpha)\left\|\vec{\rho}\right\|_{3}}$\;
$\vec{\widetilde{f}}, \vec{\widehat{f}}, \vec{\widehat{y}}\leftarrow\textsc{Augmentation}(G, s, t, F)$\;
Compute $\vec \rho$ by letting
$\rho_{e}\leftarrow\frac{\widetilde{f}_{e}}{\min\{u_{e}^{+}-f_{e}, u_{e}^{-}+f_{e}\}}$\;
$\vec{f}, \vec{y}\leftarrow\textsc{Fixing}\left(G, \vec{\widehat{f}}, \vec{\widehat{y}}\right)$\;
}
{let $S^{*}$ be the edge set that contains the $m^{4\eta}$ edges with the largest $|\rho_{e}|$\;
$G\leftarrow\textsc{Boosting}\left(G, S^{*}, U, \vec{f}, \vec{y}\right)$\;
}
}
$\vec{f}\leftarrow\textsc{FlowRounding}(G,\vec{f}, s, t)$\label{line:flow_round}\;
\While{there is an augmenting path from $s$ to $t$ with respect to $\vec f$ for $G$\label{line:flow_augmenting}} {
augment an augmenting path for $\vec f$\;
}
\end{algorithm}

\newpage

\vspace{1cm} \begin{algorithm}[H]
\caption{\textsc{Augmentation}($G$, $s$, $t$, $F$)}
\label{alg:augmentation}
\tcc{Augmentation Step}
For each $e\in E$, let $r_{e}\leftarrow\frac{1}{(u_{e}^{+}-f_{e})^{2}}+\frac{1}{(u_{e}^{-}+f_{e})^{2}}$ and $w_{e}\leftarrow\frac{1}{r_{e}}$\;
Solve Laplacian linear system $\mL(G)\vec{\widetilde{\phi}}=F\cdot\vec{\chi}_{s,t}$ where $\vec{\chi}_{s,t}$ is the vector whose entry is $-1$ (resp. $1$) at vertex $s$ (resp. $t$) and $0$ otherwise\label{line:aug_laplacian}\;
For each $e=(u,v)\in E$, let $\widetilde{f}_{e}\leftarrow\frac{\widetilde{\phi}_{v}-\widetilde{\phi}_{u}}{r_{e}}$ and $\widehat{f}_{e}\leftarrow f_{e}+\delta\widetilde{f}_{e}$\;
For each $v\in V$, let $\widehat{y}_{v}\leftarrow y_{v}+\delta\widetilde{\phi}_{v}$\;
\Return $\vec{\widetilde{f}}$, $\vec{\widehat{f}}$, $\vec{\widehat{y}}$\;
\end{algorithm}

\vspace{2cm} \begin{algorithm}[H]
\caption{\textsc{Fixing}$\left(G, \vec{\widehat{f}}, \vec{\widehat{y}}\right)$}
\label{alg:fixing}
\tcc{Fixing Step}
For each $e=(u,v)\in E$, let $r_{e}\leftarrow\frac{1}{\left(u_{e}^{+}-\widehat{f}_{e}\right)^{2}}+\frac{1}{\left(u_{e}^{-}+\widehat{f}_{e}\right)^{2}}$, $w_{e}\leftarrow\frac{1}{r_{e}}$ and
$\theta_{e}\leftarrow w_{e}\left[\left(\widehat{y}_{v}-\widehat{y}_{u}\right)-\left(\frac{1}{u_{e}^{+}-\widehat{f}_{e}}-\frac{1}{u_{e}^{-}-\widehat{f}_{e}}\right)\right]$\;
$\vec{f'}\leftarrow\vec{\widehat{f}}+\vec{\theta}$\;
Let $\vec{\widehat{\delta}}$ be $\vec{\theta}$'s residue vector\;
For each $e\in E$, let $r_{e}\leftarrow\frac{1}{(u_{e}^{+}-f'_{e})^{2}}+\frac{1}{(u_{e}^{-}+f'_{e})^{2}}$ and $w_{e}\leftarrow\frac{1}{r_{e}}$\;
Solve Laplacian linear system $\mL(G)\vec{\phi'}=-\vec{\widehat{\delta}}$\label{line:fix_laplacian}\;
For each $e=(u,v)\in E$, let $\theta_{e}'\leftarrow\frac{\phi_{v}'-\phi_{u}'}{r_{e}}$\;
For each $e\in E$, $f_{e}\leftarrow f'_{e}+\theta_{e}'$\;
For each vertex $v\in V$, $y_{v}\leftarrow \widehat{y}_{v}+\phi_{v}'$\;
\Return $\vec{f}$, $\vec{y}$\;
\end{algorithm}

\newpage

\begin{algorithm}[H]
\caption{\textsc{Boosting}$\left(G, S^{*}, U, \vec{f}, \vec{y}\right)$}
\label{alg:boosting}
\tcc{Boosting Step}
\For{each edge $e=(u,v)\in S^{*}$}{
$\beta(e)\leftarrow 2+\lceil{\frac{2U}{\min\{u_{e}^{+}-f_{e}, u_{e}^{-}+f_{e}\}}}\rceil$\;
replace $e$ with path $u\leadsto v$ that consists of $\beta(e)$ edges $e_{1}, \cdots, e_{\beta(e)}$ oriented towards $v$ and $\beta(e)+1$ vertices $v_{0}=u, v_{1},\cdots, v_{\beta(e)-1}, v_{\beta(e)}=v$\;
$e_{1}, e_{2}\leftarrow e$\;
for $3\le i\le\beta(e)$, let $u_{e_{i}}^{+}\leftarrow +\infty$ and $u_{e_{i}}^{-}\leftarrow\left(\frac{1}{u_{e}^{+}-f_{e}}-\frac{1}{u_{e}^{-}+f_{e}}\right)^{-1}(\beta(e)-2)-f_{e}$\;
for each $1\le i\le\beta(e)$, let $f_{e_{i}}\leftarrow f_{e}$\;
$y_{v_{0}}\leftarrow y_{u}$\;
$y_{v_{\beta(e)}}\leftarrow y_{v}$\;
$y_{v_{1}}\leftarrow y_{v}$\;
$y_{v_{2}}\leftarrow y_{v}+\frac{1}{u_{e}^{+}-f_{e}}-\frac{1}{u_{e}^{-}+f_{e}}$\;
    for $3\le i\le\beta(e)$, set $y_{v_{3}},\cdots,y_{v_{\beta(e)-1}}$ such that $y_{v_{i}}-y_{v_{i-1}}=-\frac{1}{\beta(e)-2}\left(\frac{1}{u_{e}^{+}-f_{e}}-\frac{1}{u_{e}^{-}+f_{e}}\right)$\;
}
Update $G$\;
\end{algorithm}

\section{Unit Capacity Minimum Cost Flow Algorithm}\label{sc:app_mincostflow}
The algorithms presented in this section are designed by Cohen, M\k{a}dry, Sankowski, and Vladu~\cite{CMSV17}, who prove their correctness. We give the algorithms as they appeared in~\cite{FGLP+20}, since there they are already phrased for a distributed setting -- in their case, this was the CONGEST model. We run the algorithm in the congested clique, and analyze the corresponding running time in \autoref{sc:mincostflow}.

\begin{algorithm}[H]
\KwIn{directed graph ${G_0}=({V_0},{E_0},\vec{c}_0)$ with each edge having unit capacity and cost $\vec{c}_0$; $|{V_0}|={n}$ and $|{E_0}|={m}$; integral demand vector $\vec{\sigma}$; the absolute maximum cost $W$;}
\caption{\textsc{MinCostFlow}($G$, $\vec{\sigma}$, $W$)}
\label{algo:min-cost-flow-final}
$G = (P\cup Q, E), \vec b, \vec{f}, \vec{y}, \vec{s}, \vec{\nu}, \widehat{\mu}$, $c_{\rho}, c_{T}, \eta\leftarrow \textsc{Initialization}(G_0, \vec \sigma)$\;
Add a new vertex $v_0$ and undirected edges $(v_0, v)$ for every $v\in P$ to $G$\;
\For{$i=1$ \KwTo $c_{T}\cdot m^{1/2-3\eta}$\label{line:mincostflow_for1}
}{
\For{each $v\in P$\label{line:mincostflow_for2}}
{
set resistance of edge $(v_0, v)$ for each $v\in P$ to be $r_{v_{0}v}\leftarrow\frac{m^{1+2\eta}}{a(v)}$, where $a(v)\leftarrow\sum_{u\in Q, e=(v,u)\in E}\nu_e+\nu_{\overline{e}}$; \tcc{$\overline{e}=(\overline{v},u)$ is $e$'s partner edge that is the unique edge sharing one common vertex from $Q$.}
}
\For{$j=1$ \KwTo $m^{2\eta}$\label{line:mincostflow_for3}}{
\While{$\left\|\vec{\rho}\right\|_{\vec{\nu},3}>c_{\rho}\cdot m^{1/2-\eta}$\label{line:mincostflow_while}}{
	$\vec{\rho}, \vec{y}, \vec{s}, \vec{\nu}\leftarrow \textsc{Perturbation}(G, \vec{\rho}, \vec{f}, \vec{y}, \vec{s}, \vec{\nu})$\;}
	$\vec{f}, \vec{s}, \vec{\rho}, \widehat{\mu} \leftarrow \textsc{Progress}(G, \vec{\sigma}, \vec{f}, \vec{\nu})$\;
}}
\textsc{Repairing}($G$, $\vec{f}$, $\vec{y}$)\;
\end{algorithm}

\newpage
\begin{algorithm}[H]
\caption{\textsc{Initialization}($G$, $\vec{\sigma}$) \label{algo:initializationbipartitegraph_app}}
Create a new vertex $v_{aux}$ with $\sigma(v_{aux})=0$\;
\For{each $v\in{V_0}$}{
$t(v)\leftarrow\sigma(v)+\frac{1}{2}\mathrm{deg}_{in}^{{G_0}}(v)-\frac{1}{2}\mathrm{deg}_{out}^{{G_0}}(v)$\;
\lIf{$t(v)>0$}{
construct $2t(v)$ parallel edges $(v, v_{aux})$ with costs $\left\|\vec{{c}}_0\right\|_1$
}
\uElseIf{$t(v)<0$}{
construct $|2t(v)|$ parallel edges $(v_{aux}, v)$ with costs $\left\|\vec{{c}}_0\right\|_1$\;
}}
Let the new graph be $G_1=(V_1, E_1, \vec{c}_1)$\;
Initialize the bipartite graph $G=(P\cup Q, E, \vec{c})$ with $E\leftarrow\emptyset
$, $P\leftarrow V_1$ and $Q\leftarrow\{e_{uv}\mid(u,v)\in E_1\}$ where $e_{uv}$ is a vertex corresponding to edge $(u,v)\in E_1$\;

\For {each $(u,v)\in E_1$} {
let $E\leftarrow E\cup\{(u, e_{uv}), (v, e_{uv})\}$ with $c(u, e_{uv})=c_1(u, v)$ and $c(v, e_{uv})=0$, and set $b(u)\leftarrow\sigma(u)+\mathrm{deg}_{in}^{{G_1}}(u)$, $b(v)\leftarrow\sigma(v)+\mathrm{deg}_{in}^{{G_1}}(v)$ and $b(e_{uv})\leftarrow 1$\;
}
For each $v\in P$, set $y_v\leftarrow\left\|\vec{c}\right\|_{\infty}$, and for each $v\notin P$, set $y_v\leftarrow 0$\;
For each $e=(u,v)\in E$, set $f_e\leftarrow\frac{1}{2}$, $s_e\leftarrow c_e+y_u-y_v$ and $\nu_e\leftarrow\frac{s_e}{2\left\|\vec{c}\right\|_{\infty}}$\;
Set $\widehat{\mu}\leftarrow\left\|\vec{c}\right\|_{\infty}$,
$c_{\rho}\leftarrow 400\sqrt{3}\cdot\log^{1/3}{W}$,
$c_{T}\leftarrow 3c_{\rho}\log{W}$ and 
$\eta\leftarrow\frac{1}{14}$\;
\Return $G$, $\vec b$, $\vec{f}$, $\vec{y}$, $\vec{s}$, $\vec{\nu}$, $\widehat{\mu}$, $c_{\rho}$, $c_{T}$ and $\eta$\;
\end{algorithm}

\begin{algorithm}[H]
\caption{\textsc{Perturbation}($G$, $\vec{\rho}$, $\vec{f}$, $\vec{y}$, $\vec{s}$, $\vec{\nu}$)}
\label{perturbation}
\For{each $v\in Q$}{
let $e=(u,v)$ and $\overline{e}=(\overline{u}, v)$\;
$y_v\leftarrow y_v-s_e$\;
$\nu_e\leftarrow 2\nu_e$\;
$\nu_{\overline{e}}\leftarrow\nu_{\overline{e}}+\frac{\nu_{e}f_{\overline{e}}}{f_e}$\;
}
\end{algorithm}

\begin{algorithm}[H]
\caption{\textsc{Progress}($G$, $\vec{\sigma}$, $\vec{f}$, $\vec{\nu}$)}
\label{progress}
For each $e\in E$, let $r_{e}\leftarrow \frac{\nu_{e}}{f_{e}^{2}}$\;
Solve Laplacian linear system $\mL(G)\vec{\widehat{\phi}}=\vec{\sigma}$\;
For each $e=(u,v)\in E$, let $\widehat{f}_{e}\leftarrow\frac{\widehat{\phi}_{v}-\widehat{\phi}_{u}}{r_{e}}$
and $\rho_{e}\leftarrow\frac{|\widehat{f}_{e}|}{f_{e}}$\;
$\delta\leftarrow\min\left\{\frac{1}{8\left\|\vec{\rho}\right\|_{\vec{\nu},4}},\frac{1}{8}\right\}$\;
Update $f_{e}'\leftarrow(1-\delta)f_{e}+\delta\widehat{f}_{e}$
and
$s_{e}'\leftarrow s_{e}-\frac{\delta}{1-\delta}(\widehat{\phi}_{v}-\widehat{\phi}_{u})$\;
For each $e\in E$, let $f_{e}^{\#}\leftarrow\frac{(1-\delta )f_{e}s_{e}}{s_{e}'}$\;
Obtain the flow vector $\vec{\sigma'}$ corresponding to the residue $\vec{f'}-\vec{f}^{\#}$\;
For each $e\in E$, let $r_{e}\leftarrow\frac{s_{e}'^{2}}{(1-\delta)f_{e}s_{e}}$\;
Solve Laplacian linear system $\mL(G)\vec{\widetilde{\phi}}=\vec{\sigma'}$\;
For each $e=(u,v)\in E$, let $\widetilde{f}_{e}\leftarrow\frac{\widetilde{\phi}_{v}-\widetilde{\phi}_{u}}{r_{e}}$\;
Update $f_{e}\leftarrow f_{e}^{\#}+\widetilde{f}_{e}$ and $s_{e}\leftarrow s_{e}'-\frac{s_{e}'\widetilde{f}_{e}}{f_{e}^{\#}}$\;
\end{algorithm}

\newpage
\begin{algorithm}
\caption{\textsc{Repairing}($G$, $\vec{f}$, $\vec{y}$)}
\label{repairing}
Let $\vec{b}^+$ be the demand vector corresponding to the current flow $\vec{f}$\;
For each $v\in P\cup Q$, set $b_{v}^{\le}\leftarrow\min(b_{v},b_{v}^{+})$\;
For each $v\in P\cup Q$, if $f(E(v))>b_{v}^{\le}$, set $\vec f$ on $E(v)$ such that $f(E(v))= b_{v}^{\le}$, and let the resulting vector be $\vec{f}^{\le}$\;
Add source $s$ and sink $t$ to $G$, and 
connect $s$ to each $v\in P$ with $f^{\le}_{sv}\leftarrow f^{\le}(E(v))$, and connect each $v\in Q$ to $t$ with $f^\le_{vt}\leftarrow f^{\le}(E(v))$ in $G$\;
$\vec M\leftarrow\textsc{FlowRounding}(G, \vec{f}^{\le}, s, t)$\;
Remove $s, t$ and related coordinates on $\vec f^\le$ and $\vec M$ from $G$\;
\For{$i=1$ to $\widetilde{O}(m^{3/7})$\label{line:repairing_forloop}}{
construct graph $\overrightarrow{G}_{M}=(P\cup Q, E_M, \widetilde{c}_M)$ using $G$, $\vec{M}$ and $\vec{\widetilde{c}}$ such that for each $e=(u,v)\in E$, $\widetilde{c}_e=c_e-y_u-y_v$ and 
$E_M=\{(u, v)\in E\mid u\in P, v\in Q\}\cup\{(u,v)\mid u\in Q, v\in P, M_{uv}\neq 0\}$, $\widetilde{c}_M(u,v)=\left\{\begin{array}{ll}
     \widetilde{c}_{uv},&u\in P, v\in Q  \\
     -\widetilde{c}_{uv},&u\in Q, v\in P 
\end{array}\right.$\;
set $F_{M}\leftarrow\{v\in P\cup Q\mid M(v)<b_{v}\}$\;
compute a shortest path $\pi$ in $\overrightarrow{G}_{M}$ from $P\cap F_{M}$ to $Q\cap F_{M}$\label{line:repairing_SP}\;
\tcc{$\mathcal{D}_{\overrightarrow{G}_{M}}(P,u)$ is the distance from $P$ to $u$ in $\overrightarrow{G}_{M}$}
\tcc{Edges that are reachable in $\overrightarrow{G}_M$ from $P \cap F_M$ have non-negative weights $\widetilde{c}_e$}
\For{$u\in P\cup Q$\label{line:repairing_for2}}{
\If{$u$ can be reached from $P$ in $\overrightarrow{G}_{M}$}{
\eIf{$u\in P$}{$y_{u}\leftarrow y_{u}-\mathcal{D}_{\overrightarrow{G}_{M}}(P,u)$\;
}{$y_{u}\leftarrow y_{u}+\mathcal{D}_{\overrightarrow{G}_{M}}(P,u)$\;}
}}
augment $\vec{M}$ using the augmenting path $\pi$\;
}
\Return $\vec{M}$\;
\end{algorithm}

\end{document}